\newcommand{\Rset}{\mathbb{R}}
\newcommand{\FF}{\mathcal{F}}
\newcommand{\GG}{\mathcal{G}}
\newcommand{\HH}{\mathcal{H}}
\newcommand{\OO}{\mathcal{O}}
\newcommand{\MM}{\mathcal{M}}
\newcommand{\Fb}{\mathbb{F}}
\newcommand{\1}{\mathbf{1}}
\newcommand{\Gn}{\mathbb{G}_n}
\newcommand{\G}{\mathbb{G}}
\newcommand{\Pn}{\mathbb{P}_n}
\newcommand{\dd}{\mathrm{d}}
\newtheorem{prop}{Proposition}
\newtheorem{lem}{Lemma}
\title{Goodness-of-fit testing based on a weighted bootstrap: A fast large-sample alternative to the parametric bootstrap}
\author{
   Ivan Kojadinovic\\ 
   \small{Laboratoire de math\'ematiques et applications, UMR CNRS 5142} \\
   \small{Universit\'e de Pau et des Pays de l'Adour} \\
   \small{B.P. 1155, 64013 Pau Cedex, France} \\
   \small{\texttt{ivan.kojadinovic@univ-pau.fr}}
   \and
   Jun Yan\\ 
   \small{Department of Statistics} \\
   \small{University of Connecticut, 215 Glenbrook Rd. U-4120} \\
   \small{Storrs, CT 06269, USA} \\
   \small{\texttt{jun.yan@uconn.edu}}
 }
\date{}
\begin{document}
\maketitle    

\begin{abstract}
  The process comparing the empirical cumulative distribution function of the sample with a parametric estimate of the cumulative distribution function is known as the {\em empirical process with estimated parameters} and has been extensively employed in the literature for goodness-of-fit testing. 
The simplest way to carry out such goodness-of-fit tests, especially in a multivariate setting, is to use a {\em parametric bootstrap}. Although very easy to implement, the parametric bootstrap can become very computationally expensive as the sample size, the number of parameters, or the dimension of the data increase. An alternative resampling technique based on a fast {\em weighted bootstrap} is proposed in this paper, and is studied both theoretically and empirically. The outcome of this work is a generic and computationally efficient {\em multiplier} goodness-of-fit procedure that can be used as a large-sample alternative to the parametric bootstrap. In order to approximately determine how large the sample size needs to be for the parametric and weighted bootstraps to have roughly equivalent powers, extensive Monte Carlo experiments are carried out in dimension one, two and three, and for models containing up to nine parameters. The computational gains resulting from the use of the proposed multiplier goodness-of-fit procedure are illustrated on trivariate financial data. A by-product of this work is a fast large-sample goodness-of-fit procedure for the bivariate and trivariate $t$ distribution whose degrees of freedom are fixed.

\medskip

\noindent {\it Key words and phrases:} asymptotically linear estimator, empirical process, multiplier central limit theorem, multivariate $t$ distribution.

\end{abstract}


\section{Introduction}

Let $\MM = \{F_\theta : \theta \in \OO \}$ be a parametric family of cumulative distribution functions (c.d.f.s) on $\Rset^d$, where $\OO$ is an open subset of $\Rset^p$, for some integers $d \geq 1$ and $p \geq 1$. Given a sample $X_1,\dots,X_n$ of i.i.d.\ random vectors on $\Rset^d$ with common c.d.f.\ $F$, we are interested in testing
$$
H_0 : F \in \MM \qquad \mbox{against} \qquad H_1 : F \not \in \MM
$$
using statistics based on the empirical process
\begin{equation}
\label{test_process}
\Fb_n(x) = \sqrt{n} \{ F_n(x) - F_{\theta_n}(x) \}, \qquad x \in \Rset^d,
\end{equation}
where 
$$
F_n(x) = \frac{1}{n} \sum_{i=1}^n \1(X_i \leq x), \qquad x \in \Rset^d,
$$
is the empirical c.d.f.\ computed from the random sample $X_1,\dots,X_n$, and $F_{\theta_n}$ is a parametric estimator of $F$ computed under the null hypothesis that $F$ belongs to $\MM$, that is, under the assumption that there exists $\theta_0 \in \OO$ such that $F = F_{\theta_0}$. The latter parametric estimator of $F$ under $H_0$ is obtained from an estimator $\theta_n$ of $\theta_0$ based on $X_1,\dots,X_n$. The above problem has been extensively studied in the literature \citep[see e.g.][]{Dar55,KacKieWol55,Suk72,Dur73,Ste76,Khm81,Dur75} and is often referred to as the problem of {\em goodness of fit when parameters are estimated}. 

Two test statistics that are frequently used are the Cram\'er--von Mises statistic
\begin{equation}
\label{Sn}
S_n = n \int_{\Rset^d} \{ F_n(x) - F_{\theta_n}(x) \}^2 \dd F_{\theta_n}(x)
\end{equation}
and the Kolmogorov-Smirnov statistic
\begin{equation}
\label{Tn}
T_n = \sqrt{n} \sup_{x \in \Rset^d} | F_n(x) - F_{\theta_n}(x) |.
\end{equation}
Minor variations of these will also be considered in this work.

One important and enduring issue regarding tests of goodness of fit based on~(\ref{test_process}) concerns the computation of critical values or $p$-values for statistics derived from $\Fb_n$. Indeed, under classical regularity conditions that will be explicitly stated in the forthcoming section, the weak limit of the process given in~(\ref{test_process}) involves a drift term that typically makes goodness-of-fit tests based on $\Fb_n$ distribution-dependent. 

To solve this problem in the univariate case, \cite{Khm81} proposed to use the theory of martingales to transform $\Fb_n$ to an asymptotically distribution-free process. In the same context, \citet{Dur73,Dur75} investigated several approaches to compute approximate critical values for the Kolmogorov-Smirnov statistic. The martingale transform of \citet{Khm81} and \citet{Dur75}'s approach based on approximate boundary crossing probabilities are reviewed and compared in \cite{Par10} when $\MM$ is a location-scale or a scale-shape univariate family. For tests based on the Cram\'er--von Mises, the Anderson--Darling or the Watson statistics, \cite{Ste76} \citep[see also][]{Suk72,Ste74} used the fact that the asymptotic distributions of these statistics can be expressed as a weighted sum of $\chi_1^2$ variables and explained in detail how to compute the unknown weights when $\MM$ is the univariate normal or the exponential distribution.

A generic, very simple to implement resampling technique, that can be used to carry out tests of goodness of fit based on $\Fb_n$ in a general multivariate setting, is the so-called {\em parametric bootstrap}  \citep[see e.g.][]{Rom88,StuGonPre93,JogRao04,GenRem08}. To fix ideas, let $S_n$ be the test statistic and assume that it is a continuous functional of the empirical process $\Fb_n$. We shall thus write $S_n = \phi(\Fb_n)$. For some large integer $N$ and given an estimator $\theta_n$ of $\theta_0$ based on $X_1,\dots,X_n$, the parametric bootstrap consists of repeating the following steps for every $k \in \{1,\dots,N\}$:
\begin{enumerate}[(a)]
\item Generate a random sample $X_1^{(k)},\dots,X_n^{(k)}$ from c.d.f.\ $F_{\theta_n}$.
\item Let $F_n^{(k)}$ and $\theta_n^{(k)}$ stand for the versions of $F_n$ and $\theta_n$ estimated from the random sample $X_1^{(k)},\dots,X_n^{(k)}$.
\item Form an approximate realization of the test statistic under $H_0$ as $S_n^{(k)} = \phi(\Fb_n^{(k)})$, where $\Fb_n^{(k)} = \sqrt{n} (F_n^{(k)} - F_{\theta_n^{(k)}} )$.
\end{enumerate}
With the convention that large values of $S_n$ lead to the rejection of $H_0$, an approximate $p$-value for the test is finally given by $N^{-1} \sum_{k=1}^N\1(S_n^{(k)} \geq S_n)$.

When $n$, $p$ or $d$ are large, the above procedure can become very computationally expensive as, for every $k \in \{1,\dots,N\}$, it requires the generation of a random sample from $F_{\theta_n}$ and the estimation of $\theta$ from the generated data, both steps being potentially very time-consuming.

A computationally more efficient approach consists of using a {\em weighted bootstrap} in the sense of \cite{Bur00} and \cite{HorKokSte00} \citep[see also][and the references therein]{Hor00}. This resampling technique, based on the multiplier central limit theorem for empirical processes \citep[see e.g.][]{vanWel96,Kos08}, was recently used for assessing the goodness of fit of copula models in \cite{KojYanHol11}. While being asymptotically equivalent to the parametric bootstrap under the null hypothesis, it was found that, in the case of large samples, the use of the weighted instead of the parametric bootstrap could reduce the computing time from about a day to minutes for certain multivariate multiparameter models. 

The aim of this paper is to investigate, both theoretically and empirically, goodness-of-fit tests for multivariate distributions based on a weighted bootstrap in the sense of \cite{Bur00}. From a practical perspective, a large scale Monte Carlo study was carried out in order to approximately determine the sample size from which the parametric and the weighted bootstrap have roughly the same power. For small samples, the parametric bootstrap usually appears more powerful, and, since it typically has an acceptable computational cost in that case, it is the recommended approach. For larger samples, the use of the parametric bootstrap can become very tedious in practice and the derived faster {\em multiplier} goodness-of-fit procedure appears as a natural alternative. 

The paper is organized as follows. In the second section, to extend the breadth of the approach, the theoretical results establishing the validity of the approach are stated in the context of the theory of empirical processes as presented for instance in \cite{vanWel96}. The results of a large-scale simulation study are partially reported in the third section for univariate, bivariate and trivariate data sets and parametric c.d.f.\ families with up to nine parameters. The last section is devoted to a detailed illustration on real financial data. All the proofs and computational details are relegated to the appendices.

Note finally that the code of all the tests studied in this work will be documented and released as an R package accompanying the paper.

\section{The weighted bootstrap for goodness-of-fit testing}

To extend the breadth of our study, we shall work in the framework of the theory of empirical processes as presented for instance in \cite{vanWel96} or \cite{Kos08}. Given a random sample $X_1,\dots,X_n$ from a probability distribution $P$ on $\Rset^d$, the empirical measure is defined to be $\Pn = n^{-1} \sum_{i=1}^n \delta_{X_i}$, where $\delta_x$ is the measure that assigns a mass of 1 at $x$ and zero elsewhere. For a measurable function $f:\Rset^d \to \Rset$, $\Pn f$ denotes the expectation of $f$ under $\Pn$, and $P f$ the expectation under $P$, i.e.,
$$
\Pn f = \frac{1}{n} \sum_{i=1}^n f(X_i) \qquad \mbox{and} \qquad Pf = \int f \dd P.
$$
The empirical process evaluated at $f$ is then defined as $\Gn f = \sqrt{n} (\Pn f - Pf)$. 

As we continue, $\FF$ denotes a $P$-Donsker class of measurable functions, which means that the sequence of processes $\{ \Gn f : f \in \FF\}$ converges weakly to a $P$-Brownian bridge $\{ \G_P f : f \in \FF\}$ in the space  $\ell^\infty(\FF)$ of bounded functions from $\FF$ to $\Rset$ equipped with the uniform metric in the sense of Definition~1.3.3 of \citet{vanWel96}. Following usual notational conventions, this weak convergence will simply be denoted by $\Gn \leadsto \G_P$ in $\ell^\infty(\FF)$. By taking $\FF$ to be the class of indicator functions of lower-left orthants in $\Rset^d$, i.e., $\FF = \{x \mapsto \1(x \leq t) : t \in \overline{\Rset}^d\}$ with $\overline{\Rset} = \Rset \cup \{-\infty,\infty\}$, one recovers a more classical version of Donsker's theorem stating that $\sqrt{n} (F_n - F)$, where $F$ is the c.d.f.\ associated with $P$, converges weakly in the space $\ell^\infty(\overline{\Rset}^d)$ of bounded functions on $\overline{\Rset}^d$ to an $F$-Brownian bridge $\beta$, i.e., a tight centered Gaussian process with covariance function $E\{\beta(x)\beta(y)\} = F(x \wedge y) -  F(x)F(y)$, $x, y \in \Rset^d$.

The advantage of working in this general framework is that the forthcoming results remain valid for any $P$-Donsker class $\FF$. Although the collection of all indicator functions of lower-left orthants in $\Rset^d$ may appear as a natural choice for $\FF$ in the goodness-of-fit framework under consideration, other choices might be of interest such as the class of indicator functions of closed balls, rectangles or half-spaces \citep[see][for a related discussion regarding the choice of $\FF$]{Rom88}.

In the rest of the paper, convergence in probability is to be understood in outer probability \citep[see e.g.][Chapter 18]{van98}.

\subsection{Theoretical results}

Given i.i.d.\ random variables $Z_1,\dots,Z_n$ with mean 0, variance 1, satisfying $\int_0^\infty \{ P(|Z_1| > x) \}^{1/2} \dd x < \infty$, and independent of the random sample $X_1,\dots,X_n$, the following {\em multiplier} versions of $\Gn$ will be of interest:
\begin{equation}
\label{multiplier_versions}
\Gn' = \frac{1}{\sqrt{n}} \sum_{i=1}^n Z_i (\delta_{X_i} - P) \qquad \mbox{and} \qquad \Gn'' = \frac{1}{\sqrt{n}} \sum_{i=1}^n (Z_i - \bar Z) \delta_{X_i},
\end{equation}
where $\bar Z = n^{-1} \sum_{i=1}^n Z_i$. 

Let $\{P_\theta : \theta \in \OO\}$ be an identifiable family of distributions, where $\OO$ is an open subset of $\Rset^p$, and let $\{\psi_\theta : \theta \in \OO \}$ be a class of measurable functions from $\Rset^d$ to $\Rset^p$. Assume additionally that
\begin{enumerate}[({A}1)]
\item for any $\theta_0 \in \OO$, the map $\theta \mapsto P_\theta$ from $\Rset^p$ to $\ell^\infty(\FF)$ is Fr\'echet differentiable at $\theta_0$, i.e, there exists a map $\dot P_{\theta_0} : \FF \to \Rset^p$ such that
$$
 \sup_{f \in \FF} | P_{\theta}f - P_{\theta_0}f - (\theta - \theta_0)^\top \dot P_{\theta_0}f | = o( \| \theta - \theta_0 \| ) \qquad \mbox{as } \qquad \theta \to \theta_0,
$$
\item for any $\theta_0 \in \OO$, 
$$
\sup_{f \in \FF} \| \dot P_\theta f - \dot P_{\theta_0} f \| = o(1) \qquad \mbox{as } \qquad \theta \to \theta_0,
$$

\item for any $\theta_0 \in \OO$, there exists a $\delta > 0$ such that the class of measurable functions $\{\psi_\theta : \|\theta - \theta_0 \| < \delta \}$ is $P$-Donsker, 

\item for any $\theta_0 \in \OO$,
$$
\int \| \psi_{\theta}(x) - \psi_{\theta_0}(x) \|^2 \dd P(x) = o(1) \qquad \mbox{as } \qquad \theta \to \theta_0,
$$

\item for any $\theta_0 \in \OO$ and a random sample $X_1,\dots,X_n$ from $P_{\theta_0}$, $\theta_n$ is an estimator of $\theta_0$ that is asymptotically linear with influence function $\psi_{\theta_0}$, i.e.,
$$
\sqrt{n} (\theta_n - \theta_0) = \frac{1}{\sqrt{n}} \sum_{i=1}^n \psi_{\theta_0}(X_i) + o_{P_{\theta_0}}(1), 
$$ 
with $P_{\theta_0} \psi_{\theta_0} = 0$ and $P_{\theta_0} \| \psi_{\theta_0} \|^2 < \infty$.
\end{enumerate}

The following two propositions, proved in Appendix~\ref{proofs}, are at the root of the goodness-of-fit procedure to be given in the next subsection.

\begin{prop}
\label{propH0}
Let $X_1,\dots,X_n$ be a random sample from the distribution $P_{\theta_0}$ for some $\theta_0 \in \OO$. If Assumptions (A1)-(A5) are satisfied, then
$$
\left(f \mapsto  \sqrt{n} (\Pn f - P_{\theta_n}f ),f \mapsto  \Gn'f - \Gn' \psi_{\theta_n}^\top \dot P_{\theta_n}f, f \mapsto \Gn''f - \Gn'' \psi_{\theta_n}^\top \dot P_{\theta_n}f \right) 
$$
converges weakly to 
$$
\left(f \mapsto \G_{\theta_0}f - \G_{\theta_0} \psi_{\theta_0}^\top \dot P_{\theta_0}f , f \mapsto \G_{\theta_0}'f - \G_{\theta_0}' \psi_{\theta_0}^\top \dot P_{\theta_0}f, f \mapsto \G_{\theta_0}'f - \G_{\theta_0}' \psi_{\theta_0}^\top \dot P_{\theta_0}f \right)
$$
in $\{ \ell^\infty(\FF) \}^3$, where $\G_{\theta_0}$, a $P_{\theta_0}$-Brownian bridge, is the weak limit of $\Gn$, and $\G_{\theta_0}'$ is an independent copy of $\G_{\theta_0}$.
\end{prop}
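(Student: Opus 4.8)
The plan is to reduce all three coordinate processes to fixed continuous linear functionals of the joint pair $(\Gn, \Gn')$ evaluated on a single enlarged index class, and then to invoke the multiplier central limit theorem. First I would enlarge the index set: let $\GG$ be the union of $\FF$ with the (finitely many) coordinate functions of the class $\{\psi_\theta : \|\theta - \theta_0\| < \delta\}$. By (A3) the latter is $P_{\theta_0}$-Donsker, and a finite union of Donsker classes is Donsker, so $\GG$ is $P_{\theta_0}$-Donsker. The multiplier central limit theorem of \citet{vanWel96} and \citet{Kos08}, whose moment hypothesis on the $Z_i$ is precisely $\int_0^\infty \{P(|Z_1|>x)\}^{1/2}\dd x < \infty$, then gives the joint weak convergence $(\Gn, \Gn') \leadsto (\G_{\theta_0}, \G_{\theta_0}')$ in $\{\ell^\infty(\GG)\}^2$ with $\G_{\theta_0}'$ an independent copy of $\G_{\theta_0}$; in particular $\|\Gn\|_\GG$ and $\|\Gn'\|_\GG$ are $O_P(1)$.

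For the first coordinate I would write $\sqrt{n}(\Pn f - P_{\theta_n}f) = \Gn f - \sqrt{n}(P_{\theta_n}f - P_{\theta_0}f)$ and substitute the Fréchet expansion (A1) together with the asymptotic linearity (A5), which, using $P_{\theta_0}\psi_{\theta_0}=0$, reads $\sqrt{n}(\theta_n - \theta_0) = \Gn\psi_{\theta_0} + o_{P_{\theta_0}}(1)$. Since the derivative in (A1) is a bounded operator, $\sup_{f\in\FF}\|\dot P_{\theta_0}f\|<\infty$; combined with $\sqrt{n}\|\theta_n-\theta_0\| = O_P(1)$ and $\theta_n \to_P \theta_0$, the remainder in (A1) is $o_P(n^{-1/2})$ uniformly in $f$. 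This yields $\sqrt{n}(\Pn f - P_{\theta_n}f) = \Gn f - (\Gn\psi_{\theta_0})^\top \dot P_{\theta_0}f + o_P(1)$ in $\ell^\infty(\FF)$. A direct computation also gives the exact identity $\Gn'' = \Gn' - \bar Z\,\Gn$ on all of $\GG$; because $\bar Z = O_P(n^{-1/2})$ and $\|\Gn\|_\GG = O_P(1)$, this shows $\Gn'' = \Gn' + o_P(1)$ uniformly on $\GG$, which is exactly why the target triple repeats $\G_{\theta_0}'$ in its second and third entries.

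The heart of the argument, and the step I expect to be the main obstacle, is replacing the estimated index $\theta_n$ by $\theta_0$ inside the multiplier drift terms, i.e.\ showing $\Gn'\psi_{\theta_n} = \Gn'\psi_{\theta_0} + o_P(1)$ in $\Rset^p$ (and likewise with $\Gn''$). Here I would use the asymptotic equicontinuity of $\Gn'$ over the Donsker class $\{\psi_\theta : \|\theta-\theta_0\|<\delta\}$ with respect to the $L^2(P_{\theta_0})$ seminorm, which is a consequence of the weak convergence obtained in the first step. Since $\theta \mapsto \int \|\psi_\theta - \psi_{\theta_0}\|^2\dd P_{\theta_0}$ is continuous at $\theta_0$ with value $0$ by (A4), the consistency $\theta_n \to_P \theta_0$ forces this seminorm to vanish in probability along the random sequence $\theta_n$, and equicontinuity then delivers $\Gn'(\psi_{\theta_n}-\psi_{\theta_0}) \to_P 0$. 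Handling the data-dependent random index $\theta_n$ rigorously (rather than a fixed deterministic sequence) through this stochastic-equicontinuity argument is where the care is required. Together with (A2), which gives $\sup_{f}\|\dot P_{\theta_n}f - \dot P_{\theta_0}f\| \to_P 0$, and the boundedness $\Gn'\psi_{\theta_0} = O_P(1)$, I would conclude that $\Gn'\psi_{\theta_n}^\top \dot P_{\theta_n}f = (\Gn'\psi_{\theta_0})^\top \dot P_{\theta_0}f + o_P(1)$ uniformly in $f\in\FF$.

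Finally I would assemble the pieces. After the above reductions, each of the three coordinates equals a \emph{fixed} continuous linear functional of $(\Gn,\Gn')$ on $\GG$ plus an $o_P(1)$ remainder in $\ell^\infty(\FF)$, namely $f\mapsto \Gn f - (\Gn\psi_{\theta_0})^\top\dot P_{\theta_0}f$ for the first, and $f\mapsto \Gn' f - (\Gn'\psi_{\theta_0})^\top\dot P_{\theta_0}f$ for the second and third. Applying the continuous mapping theorem to the joint convergence $(\Gn,\Gn')\leadsto(\G_{\theta_0},\G_{\theta_0}')$ from the first step, and then absorbing the remainders by the infinite-dimensional analogue of Slutsky's lemma, yields the stated weak limit in $\{\ell^\infty(\FF)\}^3$, with the independence of $\G_{\theta_0}'$ from $\G_{\theta_0}$ inherited directly from the multiplier central limit theorem.
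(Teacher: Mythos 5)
Your proposal is correct and follows essentially the same route as the paper: enlarge the index class, invoke the functional multiplier CLT, expand the first coordinate via (A1) and (A5), and justify replacing $\theta_n$ by $\theta_0$ in the drift through $L_2(P_{\theta_0})$-consistency of $\psi_{\theta_n}$ (A4) combined with asymptotic equicontinuity over the Donsker class from (A3), plus (A2) for the gradient. The only cosmetic differences are that you handle $\Gn''$ via the exact identity $\Gn'' = \Gn' - \bar Z\,\Gn$ rather than citing the multiplier CLT for both versions, and you phrase the index-replacement step as stochastic equicontinuity where the paper packages the same argument as a continuous-mapping step borrowed from the proof of Lemma~19.24 in \citet{van98}.
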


\begin{prop}
\label{propH1}
Let $X_1,\dots,X_n$ be a random sample from a distribution $P \not \in \{P_\theta : \theta \in \OO\}$. If Assumptions (A1)-(A4) are satisfied, and if there exists $\theta_0 \in \OO$ such that $\sqrt{n} (\theta_n - \theta_0)$ converges in distribution under $P$, then
$$
\sup_{f \in \FF} | \sqrt{n} (\Pn f - P_{\theta_n} f) | \overset{P}{\to} \infty
$$
while
$$
 \left(f \mapsto \Gn'f - \Gn' \psi_{\theta_n}^\top \dot P_{\theta_n}f, f \mapsto \Gn''f - \Gn'' \psi_{\theta_n}^\top \dot P_{\theta_n}f \right) 
$$
converges weakly to 
$$
\left(f \mapsto \G_P'f - \G_P' \psi_{\theta_0}^\top \dot P_{\theta_0}f, f \mapsto \G_P'f - \G_P' \psi_{\theta_0}^\top \dot P_{\theta_0}f \right)
$$
in $\{ \ell^\infty(\FF) \}^2$, where $\G_P'$, a $P$-Brownian bridge, is the weak limit of $\Gn'$. 
\end{prop}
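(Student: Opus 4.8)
The plan is to treat the two assertions separately, reusing for the second one essentially the machinery that underlies Proposition~\ref{propH0}, the only structural difference being that the ``pseudo-true'' parameter $\theta_0$ is now the in-distribution limit of $\theta_n$ rather than the data-generating parameter. Two preliminary remarks organize the whole argument. First, the assumed in-distribution convergence of $\sqrt{n}(\theta_n - \theta_0)$ forces both $\theta_n - \theta_0 = o_P(1)$ and $\sqrt{n}(\theta_n - \theta_0) = O_P(1)$; these are the two facts that here replace the role played by (A5) under $H_0$. Second, I write the estimated-parameter correction as an inner product in $\Rset^p$: since $\dot P_{\theta_n}f$ does not depend on the sample point, linearity of $\Gn'$ gives $\Gn'(\psi_{\theta_n}^\top \dot P_{\theta_n}f) = (\Gn' \psi_{\theta_n})^\top \dot P_{\theta_n}f$, where $\Gn' \psi_{\theta_n} = n^{-1/2}\sum_{i=1}^n Z_i\{\psi_{\theta_n}(X_i) - P\psi_{\theta_n}\} \in \Rset^p$, and analogously for $\Gn''$ and for the ordinary process $\Gn$.

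For the divergence statement, I would decompose, for $f \in \FF$,
$$
\sqrt{n}(\Pn f - P_{\theta_n}f) = \Gn f + \sqrt{n}(Pf - P_{\theta_0}f) + \sqrt{n}(P_{\theta_0}f - P_{\theta_n}f).
$$
Since $\FF$ is $P$-Donsker, $\sup_f |\Gn f| = O_P(1)$; and by the Fr\'echet differentiability (A1) together with $\sqrt{n}(\theta_n - \theta_0) = O_P(1)$ and the boundedness $\sup_f \|\dot P_{\theta_0}f\| < \infty$ of the derivative operator, $\sup_f |\sqrt{n}(P_{\theta_0}f - P_{\theta_n}f)| = O_P(1)$. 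Because $P \neq P_{\theta_0}$ (as $P \notin \{P_\theta\}$ while $\theta_0 \in \OO$) and $\FF$ separates $P$ from $P_{\theta_0}$ --- as it does, e.g., for the orthant indicators --- there is $f^\ast \in \FF$ with $c := |Pf^\ast - P_{\theta_0}f^\ast| > 0$, so $|\sqrt{n}(Pf^\ast - P_{\theta_0}f^\ast)| = \sqrt{n}\,c \to \infty$ while the other two terms stay $O_P(1)$. Evaluating the displayed identity at $f^\ast$ then shows that the supremum over $\FF$ tends to $\infty$ in probability.

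For the multiplier statement, I would first invoke the multiplier central limit theorem for the $P$-Donsker class obtained as the union of $\FF$ with the $p$ coordinate functions of $\psi_{\theta_0}$ (a finite union of Donsker classes, Donsker by (A3) and the hypothesis on $\FF$), which yields the joint weak convergence $(\Gn'f, \Gn'\psi_{\theta_0}) \leadsto (\G_P'f, \G_P'\psi_{\theta_0})$ in $\ell^\infty(\FF) \times \Rset^p$. The crux is then to replace the random index $\theta_n$ by $\theta_0$ in the correction term. For the factor $\dot P_{\theta_n}f$, Assumption (A2) combined with $\theta_n \overset{P}{\to} \theta_0$ gives $\sup_f \|\dot P_{\theta_n}f - \dot P_{\theta_0}f\| \overset{P}{\to} 0$, whence also $\sup_f \|\dot P_{\theta_n}f\| = O_P(1)$. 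For the factor $\Gn'\psi_{\theta_n}$, I would use that $\{\psi_\theta : \|\theta - \theta_0\| < \delta\}$ is $P$-Donsker (A3), so the multiplier process over this class is asymptotically equicontinuous for $\rho(\psi_\theta, \psi_{\theta_0}) = \{P\|\psi_\theta - \psi_{\theta_0}\|^2\}^{1/2}$; since (A4) and $\theta_n \overset{P}{\to} \theta_0$ give $\rho(\psi_{\theta_n}, \psi_{\theta_0}) \overset{P}{\to} 0$, equicontinuity yields $\Gn'\psi_{\theta_n} - \Gn'\psi_{\theta_0} \overset{P}{\to} 0$. Writing
$$
(\Gn'\psi_{\theta_n})^\top \dot P_{\theta_n}f - (\Gn'\psi_{\theta_0})^\top \dot P_{\theta_0}f = (\Gn'\psi_{\theta_n} - \Gn'\psi_{\theta_0})^\top \dot P_{\theta_n}f + (\Gn'\psi_{\theta_0})^\top (\dot P_{\theta_n}f - \dot P_{\theta_0}f),
$$
both pieces are of the form $o_P(1)\cdot O_P(1)$ uniformly in $f$, so the correction is $o_P(1)$ in $\ell^\infty(\FF)$. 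Slutsky's lemma together with the continuous bounded linear map $(\zeta, v) \mapsto (f \mapsto \zeta(f) - v^\top \dot P_{\theta_0}f)$ then gives the convergence of the first coordinate to $\G_P'f - \G_P'\psi_{\theta_0}^\top \dot P_{\theta_0}f$.

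It remains to treat the $\Gn''$ coordinate and obtain the joint statement, and to flag the delicate step. A direct computation gives $\Gn'g - \Gn''g = \bar Z\,\Gn g$ for any $g$, so $\sup_f |\Gn'f - \Gn''f| = |\bar Z| \sup_f |\Gn f| = O_P(n^{-1/2})O_P(1) = o_P(1)$ and likewise $\Gn''\psi_{\theta_n} - \Gn'\psi_{\theta_n} = \bar Z\,\Gn\psi_{\theta_n} = o_P(1)$, once one notes $\Gn\psi_{\theta_n} = O_P(1)$ by the same equicontinuity argument applied to the ordinary process (using that $\psi_{\theta_0} \in L^2(P)$, which is implied by (A3)). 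Hence the $\Gn''$ coordinate is asymptotically identical to the $\Gn'$ coordinate, and the two therefore converge jointly to the common limit $\G_P'f - \G_P'\psi_{\theta_0}^\top \dot P_{\theta_0}f$. The main obstacle is precisely the plug-in step $\Gn'\psi_{\theta_n} = \Gn'\psi_{\theta_0} + o_P(1)$: inserting the data-dependent random index $\theta_n$ into the multiplier process is legitimate only through the asymptotic equicontinuity supplied by the Donsker property (A3) and the $L^2(P)$-negligibility (A4), and care is needed that the version of the multiplier central limit theorem invoked delivers equicontinuity in the relevant outer-probability sense. Everything else is a routine combination of Slutsky's lemma, the continuous mapping theorem in $\ell^\infty(\FF)$, and the Fr\'echet-differentiability bookkeeping of (A1)--(A2).
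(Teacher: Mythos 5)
Your argument is correct and follows essentially the same route as the paper's. For the divergence you use the identical three-term decomposition $\sqrt{n}(\Pn - P_{\theta_n}) = \sqrt{n}(\Pn - P) - \sqrt{n}(P_{\theta_n} - P_{\theta_0}) - \sqrt{n}(P_{\theta_0} - P)$, with the first term tight by the Donsker property, the second tight by (A1) and $\sqrt{n}(\theta_n-\theta_0)=O_P(1)$, and the deterministic middle term blowing up; you are in fact slightly more careful than the paper in flagging that this last step needs $\FF$ to separate $P$ from $P_{\theta_0}$ (the paper leaves that implicit in ``$P \not\in \{P_\theta\}$''). The weak-convergence part is, in the paper, delegated to a lemma shared with Proposition~\ref{propH0}: the functional multiplier central limit theorem (Kosorok, Theorem~10.1 and Corollary~10.3) delivers the joint convergence of $(\Gn,\Gn',\Gn'')$ in one stroke, and the plug-in $\Gn'\psi_{\theta_n} = \Gn'\psi_{\theta_0} + o_P(1)$ is handled by the continuous-mapping device of van der Vaart's Lemma~19.24 rather than by asymptotic equicontinuity --- the two are equivalent and rest on exactly (A3) and (A4), as you say. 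Your treatment of the $\Gn''$ coordinate via the identity $\Gn'g - \Gn''g = \bar Z\,\Gn g$ is a nice elementary substitute for invoking the second multiplier CLT; it buys self-containedness at the cost of additionally needing $\Gn\psi_{\theta_n} = O_P(1)$, which you correctly supply. No gaps.
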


\subsection{The goodness-of-fit procedure}
\label{procedure}

Let us reformulate the results stated in Propositions~\ref{propH0} and~\ref{propH1} when $\FF$ is the class of indicator functions of lower-left orthants in $\Rset^d$. As the empirical process $\Gn'$ defined in~(\ref{multiplier_versions}) depends on the unknown true distribution $P$, we shall only deal with the parts of the above results concerned with $\Gn''$. Thus, let 
\begin{equation}
\label{multF}
\Fb_n''(x) = \frac{1}{\sqrt{n}} \sum_{i=1}^n (Z_i - \bar Z) \{ \1(X_i \leq x) -  \psi_{\theta_n}^\top(X_i) \dot F_{\theta_n}(x) \}, \qquad x \in \Rset^d,
\end{equation}
where $\{F_\theta : \theta \in \OO\} = \MM$ is the set of c.d.f.s associated with the parametric family of distributions $\{P_\theta : \theta \in \OO\}$, and where, for any $x \in \Rset^d$, $\dot F_\theta(x)$ is the gradient of $F_\theta(x)$ with respect to $\theta$. Proposition~\ref{propH0} then states that, if $P = P_{\theta_0}$ for some $\theta_0 \in \OO$, then, under explicit regularity conditions, $\Fb_n$ and $\Fb_n''$ jointly converge weakly in $\{\ell^\infty(\overline{\Rset}^d)\}^2$ to independent copies of the same limit. Roughly speaking, under the null hypothesis $H_0 : F \in \MM$, where $F$ is the c.d.f.\ associated to $P$, the empirical process $\Fb_n''$ is close to being an independent copy of $\Fb_n$. Every new set of multipliers $Z_1\dots,Z_n$ gives a new approximate independent copy of $\Fb_n$. 

Proposition~\ref{propH1} is concerned with the behavior of $\Fb_n$ and $\Fb_n''$ under the alternative hypothesis $H_1 : F \not \in \MM$. If $F \not \in \{F_{\theta} : \theta \in \OO\}$, then, under explicit regularity conditions, any sensible statistic derived from $\Fb_n$ will tend to infinity in probability because $ \sup_{x \in \Rset^d}| \Fb_n(x) | \overset{P}{\to} \infty$, while $\Fb_n''$ still converges weakly. 

The verification of Assumptions (A1)-(A5) for a given family $\MM$ of c.d.f.s is discussed in Section~\ref{checking}.

Now, for illustration purposes, let us assume that the test statistic is the Cram\'er--von Mises statistic $S_n$ defined in~(\ref{Sn}). The results of Propositions~\ref{propH0} and~\ref{propH1} reformulated above then suggest adopting the following goodness-of-fit procedure:
\begin{enumerate}

\item Estimate $\theta_0$ using an asymptotically linear estimator $\theta_n$.

\item Compute the test statistic $S_n = \int_{\Rset^d} \{ \Fb_n(x) \}^2 \dd F_{\theta_n}(x)$.

\item Then, for some large integer $N$, repeat the following steps for every $k \in \{1,\dots,N\}$:
  \begin{enumerate}
  \item Generate $n$ i.i.d.\ random variates $Z_1,\dots,Z_n$ with expectation 0, variance~1 and satisfying $\int_0^\infty \{ P(|Z_1| > x) \}^{1/2} \dd x < \infty$.
  \item Form an approximate realization of the test statistic under $H_0$ by
    $$
    S_n^{(k)} =  \int_{\Rset^d} \left\{ \Fb_n''(x) \right\}^2 \dd F_{\theta_n}(x), 
    $$
where $\Fb_n''$ is defined in~(\ref{multF})
  \end{enumerate}

\item An approximate $p$-value for the test is then given by $N^{-1} \sum_{k=1}^N \1(S_n^{(k)} \geq S_n)$. 

\end{enumerate}

The results given in the previous subsection imply that, under $H_0: F \in \MM$ and regularity conditions, the above testing procedure will hold its level asymptotically. Indeed, $(S_n,S_n^{(1)},\dots,S_n^{(N)})$ converge jointly in distribution to independent copies of the same limit and, thus, the approximate $p$-value computed in Step 4 of the procedure will be approximately standard uniform. On the other hand, under $H_1: F \not \in \MM$ and regularity conditions, $S_n$ tends in probability to infinity while $(S_n^{(1)},\dots,S_n^{(N)})$ still converges in distribution. It follows that the approximate $p$-value will tend to zero in probability.

The potential computational advantage of the multiplier goodness-of-fit procedure over the parametric bootstrap is best seen when Step 3 above is compared with the procedure recalled in the introduction. Roughly speaking, random number generation from the fitted distribution and estimation of the parameters from the generated data at each parametric bootstrap iteration is replaced by the random generation of $n$ multipliers, typically from the standard normal distribution or the uniform distribution on $\{-1,1\}$. 

To obtain even faster goodness-of-fits tests, we consider variations of $S_n$ and of the Kolmogorov-Smirnov statistic $T_n$ defined in~(\ref{Tn}) given by
\begin{equation}
\label{Sn'}
S_n^* =  \int_{\Rset^d}\{ \Fb_n(x) \}^2 \dd F_n(x) = \frac{1}{n} \sum_{i=1}^n \{ \Fb_n(X_i) \}^2 = \sum_{i=1}^n \{F_n(X_i) - F_{\theta_n}(X_i)\}^2,
\end{equation}
and
\begin{equation}
\label{Tn'}
T_n^* =  \max_{i \in \{1,\dots,n\}}  \left| \Fb_n(X_i) \right| = \sqrt{n} \max_{i \in \{1,\dots,n\}} \left|F_n(X_i) - F_{\theta_n}(X_i) \right|,
\end{equation}
respectively. When the goodness-of-fit procedure is based on $S_n^*$ (resp.\ $T_n^*$), the multiplier realizations of Step 3~(b) are thus simply given by
$$
S_n^{*,(k)} = \int_{\Rset^d}\{ \Fb_n''(x) \}^2 \dd F_n(x) = \frac{1}{n} \sum_{i=1}^n \{ \Fb_n''(X_i) \}^2 
\qquad \left(\mbox{resp. } T_n^{*,(k)} = \max_{i \in \{1,\dots,n\}} | \Fb_n''(X_i)  | \right).
$$

\subsection{About the assumptions}
\label{checking}

When $\FF$ is the class of indicator functions of lower-left orthants in $\Rset^d$, Assumptions (A1) and (A2) concern the existence and the smoothness of the gradient $\dot F_\theta$. As discussed for instance in \citet{GenHae06}, these two conditions are typically satisfied if, for any $\theta \in \OO$, $F_\theta$ has a p.d.f.\ $f_\theta$, and the function $(x,\theta) \mapsto f_\theta(x)$ is smooth in both $x$ and $\theta$. 

According to Example 19.7 in \cite{van98}, Assumption (A3) is satisfied if there exists a measurable function $m$ such that
\begin{equation}
\label{lipschitz}
\| \psi_{\theta_1}(x) - \psi_{\theta_2}(x) \| \leq m(x) \| \theta_1 - \theta_2 \| , \qquad \forall \, \theta_1, \theta_2 \in \{ \theta : \| \theta - \theta_0 \| < \delta \},
\end{equation}
with $P |m|^r < \infty$ for some $r > 0$. The above inequality is satisfied for many M-estimators with  $P m^2 < \infty$. Assume furthermore that the functions $\theta \mapsto \psi_\theta(x)$ are continuously differentiable and denote by $\dot \psi_\theta(x)$ the gradient of $\psi_\theta(x)$ with respect to $\theta$. Since $\theta \mapsto \dot \psi_\theta(x)$ is continuous, then, as explained in \citet[page 53]{van98}, the natural candidate for $m$ is $\dot \psi(x) = \sup_{\theta :  \| \theta - \theta_0 \| < \delta} \| \dot \psi_\theta(x) \|$. It then remains to verify that $P \dot \psi^2 < \infty$. In other words, Assumption (A3) is typically satisfied if there exists a square-integrable function  $\dot \psi$ such that $\| \dot \psi_\theta(x) \| \leq \dot \psi(x)$ when $\theta$ is close to $\theta_0$. Furthermore, if~(\ref{lipschitz}) holds, then, for $\| \theta - \theta_0 \| < \delta$,
$$
\int \| \psi_{\theta}(x) - \psi_{\theta_0}(x) \|^2 \dd P(x) \leq \| \theta - \theta_0 \| \int m^2(x) \dd P(x),
$$
and therefore, Assumption (A4) holds if $P m^2 < \infty$. 

Assumption~(A5) is related to the estimation of $\theta_0$ under the null hypothesis, and is typically established in the theorems proving the asymptotic normality of $M$-estimators. It follows that (A5) holds under the assumptions of these theorems \citep[see e.g.][Section 5.3]{van98}.

\section{Finite-sample performance}

In order to study the finite-sample performance of the proposed multiplier goodness-of-fit procedure, extensive Monte Carlo experiments were carried out in dimension one, two and three, for two- to nine-parameter families of absolutely continuous c.d.f.s. In all cases, the multipliers in the goodness-of-fit procedure given in Section~\ref{procedure} are taken from the standard normal distribution.

For a given hypothesized family $\MM = \{F_\theta : \theta \in \OO \}$, the estimation of the unknown parameter vector $\theta_0$ was performed using numerical maximum likelihood estimation based on the Nelder-Mead optimization algorithm as implemented in the R {\tt optim} routine \citep{Rsystem}. Method-of-moment estimates of $\theta_0$ were used as starting values. 

The use of maximum likelihood estimation implies, under classical regularity conditions \citep[see e.g.][Section 5.5]{van98}, that, for any $\theta_0 \in \OO$, the influence function $\psi_{\theta_0}$ appearing in Assumption~(A5) is given by $$
\psi_{\theta_0}(x) = I_{\theta_0}^{-1} \frac{\dot f_{\theta_0}(x)}{f_{\theta_0}(x)} \1 \{f_{\theta_0}(x) > 0 \},
$$ 
where $f_{\theta_0}$ is the probability density function (p.d.f.) associated with $F_{\theta_0}$, $\dot f_{\theta_0}(x)$ is the gradient of $f_{\theta_0}(x)$ with respect to~$\theta_0$, and $I_{\theta_0}$ is the Fisher information matrix. Equivalently, $\psi_{\theta_0}(x)$ can be taken as $I_{\theta_0}^{-1}$ times the gradient of $\log f_{\theta_0}(x)$ with respect to~$\theta_0$. 

From Section~\ref{procedure}, and in particular~(\ref{multF}), we see that $\psi_{\theta_n}$ is to be computed instead of the unknown function $\psi_{\theta_0}$. To obtain a more generic implementation, all gradients, including $\dot F_{\theta_n}$, were computed numerically using Richardson's extrapolation method as implemented in the R package {\tt numDeriv} \citep{numDeriv}. As shall be explained in more detail in the forthcoming subsections, this numerical approach was compared with a more precise implementation when $\MM$ is the set of c.d.f.s of the multivariate $t$ distribution with fixed degrees of freedom (d.f.). In both cases, the matrix $I_{\theta_n}$ was estimated as the sample covariance matrix of the sample $\dot f_{\theta_n}(X_1)/f_{\theta_n}(X_1),\dots,\dot f_{\theta_n}(X_n)/f_{\theta_n}(X_n)$. 

\subsection{Univariate experiments}

In dimension one, four candidate test statistics were considered: numerical approximations of $S_n$ and $T_n$ defined in~(\ref{Sn}) and~(\ref{Tn}), respectively, and the statistics $S_n^*$ and $T_n^*$ defined in~(\ref{Sn'}) and~(\ref{Tn'}), respectively. 

The numerical approximations of $S_n$ and $T_n$ are based on a uniform grid $u_1,\dots,u_m$ on $(0,1)$ which is transformed as $y_i = F^{-1}_{\theta_n}(u_i)$, $i \in \{1,\dots,m\}$. The goodness-of-fit procedure given in Section~\ref{procedure} is then based on the comparison of
$$
S_n \approx \frac{1}{m} \sum_{i=1}^m \{ \Fb_n(y_i) \}^2 
\qquad \left(\mbox{resp. } T_n \approx \max_{i \in \{1,\dots,m\}} | \Fb_n(y_i)  | \right),
$$
with the set of $N$ multiplier realizations
$$
S_n^{(k)} \approx \frac{1}{m} \sum_{i=1}^m \{ \Fb_n''(y_i) \}^2 \qquad \left(\mbox{resp. } T_n^{(k)} \approx \max_{i \in \{1,\dots,m\}} | \Fb_n''(y_i)  | \right), \qquad k \in \{1,\dots,N\}, 
$$
where $\Fb_n''$ is defined in~(\ref{multF}). The grid size $m$ was set to 1000 in our experiments. 

In terms of distributions, five two-parameter families of absolutely continuous c.d.f.s were considered, namely the normal, $t$, logistic, gamma and Weibull distributions. The d.f.\ of the $t$ distribution were fixed (to five) to avoid numerical issues that arise when attempting to estimate them \citep[see e.g.][for a review]{NadKot08}. These distributions are abbreviated by N, T5, L, G and W as we continue.

For data generation, the expectation and the variance of N were set to 10 and 1, respectively. For each of the remaining four distributions, its parameters were determined as approximate minimizers of the Kullback-Leibler divergence between the p.d.f.\ of the distribution and the p.d.f.\ of the normal $N(10,1)$. The expectation of T5  was thus fixed to 10 and its dispersion (scale) parameter to 0.856. The location and scale parameters of L were fixed to 10 and 0.572, respectively. The shape and rate parameters of G were set to 98.671 and 9.866, respectively, while, for W, the shape and scale parameters were fixed to 10.618 and 10.452, respectively. The parametrization of L, G and W used in this work corresponds to the implementation of these distributions in the R statistical system. A plot of the p.d.f.s of the five data generating distributions is given in Figure~\ref{uniden}.

\begin{figure}[t!]
  \begin{center}
    \includegraphics*[width=0.5\linewidth]{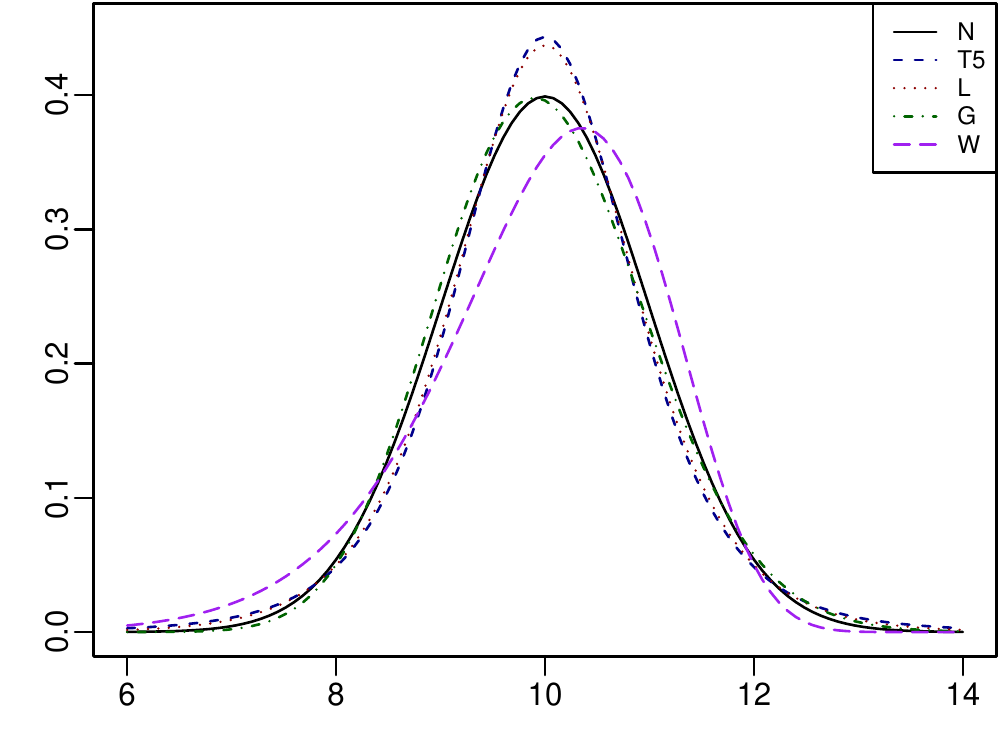}
  \end{center}
  \caption{Plot of the five p.d.f.s used for data generation in the first univariate experiment. The normal distribution is the $N(10,1)$, while the parameters of the remaining four distributions were determined as approximate minimizers of the Kullback-Leibler divergence between the distribution and the normal.}
  \label{uniden}
\end{figure}

For each data generating distribution, 1000 random samples of size $n$ were generated for $n=100$, 200, 300, 400 and 500. Under each scenario, the five families mentioned above were hypothesized. For each of the four test statistics, the proposed multiplier goodness-of-fit procedure (abbreviated by MP as we continue) was compared with the parametric bootstrap-based goodness-of-fit procedure (abbreviated by PB). All the tests were carried out at the 5\% level of significance. As the Cram\'er--von Mises statistics $S_n$ and $S_n^*$ consistently outerperformed the Kolmogorov-Smirnov statistics $T_n$ and $T_n^*$, we only report the rejection rates of the former in Table~\ref{dim1}. The five horizontal blocks of the table correspond to the five data generating distributions whose parameters were given previously. The first two columns contain the empirical powers of the Shapiro-Wilk and Jarque-Bera tests of univariate normality. The empirical levels of all the tests are in italic in the table. As can be noticed, these are, overall, reasonably close to the 5\% significance level even for small~$n$.

\begin{sidewaystable}[tbp]
\addtolength{\tabcolsep}{-3pt}
\begin{center}
\caption{Rejection rate (in \%) of the null hypothesis in the univariate case as observed in 1000 random samples of size $n=100$, 200, 300, 400 and 500.}
\label{dim1}
\begin{tabular}{cc rrrrrr rrrr rrrr rrrr rrrr}
\toprule
True &  & \multicolumn{6}{c}{N} & \multicolumn{4}{c}{T5} & \multicolumn{4}{c}{L} & \multicolumn{4}{c}{G} & \multicolumn{4}{c}{W}\\
\cmidrule(lr){3-8}\cmidrule(lr){9-12}\cmidrule(lr){13-16}\cmidrule(lr){17-20}\cmidrule(lr){21-24}
dist  & $n$  &    &    & \multicolumn{2}{c}{$S_n$} & \multicolumn{2}{c}{$S_n^*$} & \multicolumn{2}{c}{$S_n$} & \multicolumn{2}{c}{$S_n^*$} & \multicolumn{2}{c}{$S_n$} & \multicolumn{2}{c}{$S_n^*$} & \multicolumn{2}{c}{$S_n$} & \multicolumn{2}{c}{$S_n^*$} & \multicolumn{2}{c}{$S_n$} & \multicolumn{2}{c}{$S_n^*$} \\
\cmidrule(lr){5-6}\cmidrule(lr){7-8}\cmidrule(lr){9-10}\cmidrule(lr){11-12}\cmidrule(lr){13-14}\cmidrule(lr){15-16}\cmidrule(lr){17-18}\cmidrule(lr){19-20}\cmidrule(lr){21-22}\cmidrule(lr){23-24}
  &   & SH & JB & MP & PB & MP & PB & MP & PB & MP & PB & MP & PB & MP & PB & MP & PB & MP & PB & MP & PB & MP & PB\\
  \midrule
N & 100 & {\it   4.0} & {\it   3.8} & {\it   4.7} & {\it   3.7} & {\it   4.8} & {\it   4.4} &  12.5 &  12.4 &  11.7 &  12.5 &  10.2 &  10.0 &   9.9 &   9.8 &   9.0 &   8.2 &   6.3 &   6.4 &  44.7 &  49.2 &  45.6 &  53.7 \\ 
   & 200 & {\it   4.6} & {\it   4.9} & {\it   4.9} & {\it   5.3} & {\it   5.3} & {\it   5.0} &  27.2 &  23.3 &  26.7 &  23.4 &  21.5 &  17.4 &  22.0 &  17.2 &  11.9 &  11.4 &   9.3 &   9.1 &  78.8 &  83.5 &  80.6 &  85.6 \\ 
   & 300 & {\it   5.8} & {\it   5.2} & {\it   5.6} & {\it   5.4} & {\it   4.8} & {\it   5.5} &  36.4 &  36.3 &  36.4 &  36.7 &  25.8 &  25.8 &  26.6 &  25.7 &  16.0 &  17.1 &  13.1 &  14.2 &  93.2 &  94.3 &  93.8 &  95.5 \\ 
   & 400 & {\it   5.1} & {\it   4.9} & {\it   5.4} & {\it   4.8} & {\it   5.3} & {\it   5.3} &  51.5 &  47.8 &  51.7 &  47.6 &  38.6 &  34.8 &  39.7 &  34.3 &  19.5 &  22.8 &  17.7 &  19.0 &  98.5 &  98.9 &  98.9 &  99.1 \\ 
   & 500 & {\it   4.6} & {\it   4.3} & {\it   5.5} & {\it   5.9} & {\it   5.6} & {\it   6.2} &  64.2 &  58.3 &  64.6 &  58.4 &  47.5 &  40.6 &  48.0 &  40.7 &  24.5 &  24.4 &  21.1 &  22.1 &  99.6 &  99.5 &  99.6 &  99.5 \\ 
   [1ex]
T5 & 100 &  54.5 &  61.7 &  30.7 &  39.1 &  25.1 &  38.7 & {\it   4.1} & {\it   5.5} & {\it   3.2} & {\it   5.3} &   5.2 &   7.5 &   3.3 &   7.5 &  38.3 &  46.7 &  30.6 &  43.5 &  75.7 &  79.4 &  73.7 &  80.4 \\ 
   & 200 &  81.6 &  86.0 &  59.5 &  67.3 &  55.0 &  66.8 & {\it   4.7} & {\it   6.6} & {\it   4.3} & {\it   6.2} &   8.5 &   9.7 &   7.8 &   9.5 &  67.4 &  74.4 &  63.0 &  72.7 &  97.9 &  98.5 &  97.9 &  98.6 \\ 
   & 300 &  93.0 &  95.6 &  78.9 &  80.3 &  75.4 &  79.7 & {\it   4.0} & {\it   5.1} & {\it   3.7} & {\it   5.4} &   7.9 &  11.5 &   7.6 &  11.1 &  85.7 &  85.7 &  82.8 &  84.7 &  99.8 &  99.9 &  99.8 &  99.9 \\ 
   & 400 &  97.4 &  98.6 &  88.6 &  93.0 &  86.2 &  93.2 & {\it   4.8} & {\it   5.1} & {\it   4.5} & {\it   5.0} &  10.0 &  14.8 &   9.4 &  14.9 &  92.4 &  95.9 &  90.8 &  95.6 & 100.0 & 100.0 & 100.0 & 100.0 \\ 
   & 500 &  99.0 &  99.4 &  94.6 &  96.4 &  93.3 &  96.6 & {\it   4.5} & {\it   5.0} & {\it   4.5} & {\it   4.8} &  12.0 &  14.6 &  11.9 &  14.7 &  97.6 &  97.4 &  96.8 &  97.4 & 100.0 & 100.0 & 100.0 & 100.0 \\ 
   [1ex]
L & 100 &  30.7 &  37.2 &  14.4 &  21.1 &  10.6 &  19.7 &   4.7 &   6.4 &   4.1 &   6.3 & {\it   4.5} & {\it   6.4} & {\it   4.1} & {\it   6.3} &  21.2 &  27.9 &  14.9 &  24.5 &  66.0 &  73.4 &  64.6 &  75.6 \\ 
   & 200 &  48.3 &  56.5 &  26.5 &  33.2 &  22.0 &  32.5 &   4.7 &   6.2 &   4.5 &   6.2 & {\it   4.0} & {\it   5.9} & {\it   3.7} & {\it   5.8} &  37.8 &  47.8 &  31.9 &  45.0 &  94.3 &  95.6 &  94.3 &  96.2 \\ 
   & 300 &  63.9 &  71.5 &  38.2 &  48.6 &  35.1 &  47.6 &   6.6 &   5.1 &   6.6 &   5.1 & {\it   4.7} & {\it   4.7} & {\it   4.6} & {\it   4.7} &  53.3 &  63.1 &  47.5 &  61.4 &  99.2 &  99.2 &  99.2 &  99.4 \\ 
   & 400 &  75.0 &  80.9 &  53.1 &  59.9 &  49.9 &  59.3 &   7.4 &   7.0 &   7.1 &   7.2 & {\it   4.7} & {\it   5.0} & {\it   4.6} & {\it   5.1} &  70.0 &  74.5 &  65.2 &  73.2 &  99.7 & 100.0 &  99.8 & 100.0 \\ 
   & 500 &  82.7 &  88.1 &  63.5 &  67.6 &  59.7 &  67.2 &   8.2 &   9.4 &   8.1 &   9.4 & {\it   4.8} & {\it   6.0} & {\it   4.8} & {\it   6.1} &  80.3 &  81.8 &  77.2 &  79.9 & 100.0 & 100.0 &  99.9 & 100.0 \\ 
   [1ex]
G & 100 &  12.1 &   9.3 &   8.6 &   9.1 &  10.8 &  10.7 &  13.7 &  14.6 &  12.5 &  13.7 &  11.6 &  11.8 &  11.7 &  11.8 & {\it   5.1} & {\it   5.5} & {\it   5.5} & {\it   5.8} &  67.9 &  75.2 &  69.3 &  78.1 \\ 
   & 200 &  17.0 &  13.9 &  11.7 &  13.2 &  13.8 &  14.8 &  29.9 &  24.7 &  28.1 &  24.2 &  23.2 &  19.7 &  23.5 &  19.7 & {\it   5.6} & {\it   5.5} & {\it   5.2} & {\it   5.7} &  95.1 &  96.3 &  95.6 &  97.2 \\ 
   & 300 &  23.6 &  22.1 &  14.8 &  15.0 &  17.8 &  17.3 &  43.2 &  40.0 &  41.4 &  39.0 &  32.9 &  31.0 &  33.0 &  30.9 & {\it   4.9} & {\it   5.3} & {\it   5.5} & {\it   5.4} &  99.7 &  99.4 &  99.8 &  99.4 \\ 
   & 400 &  29.4 &  27.0 &  17.8 &  20.5 &  20.0 &  21.8 &  57.8 &  53.4 &  57.1 &  52.5 &  45.1 &  41.0 &  46.1 &  41.0 & {\it   5.7} & {\it   3.5} & {\it   5.4} & {\it   3.6} & 100.0 & 100.0 & 100.0 & 100.0 \\ 
   & 500 &  35.8 &  32.5 &  22.9 &  24.2 &  25.1 &  26.6 &  68.1 &  66.6 &  66.9 &  65.8 &  55.2 &  51.0 &  55.9 &  51.2 & {\it   4.8} & {\it   3.7} & {\it   4.8} & {\it   4.0} & 100.0 & 100.0 & 100.0 & 100.0 \\ 
   [1ex]
W & 100 &  65.9 &  53.0 &  44.5 &  52.6 &  34.1 &  42.9 &  24.1 &  34.0 &  23.2 &  37.4 &  22.3 &  33.0 &  20.0 &  32.2 &  70.1 &  76.7 &  59.4 &  70.3 & {\it   4.3} & {\it   4.3} & {\it   4.4} & {\it   4.1} \\ 
   & 200 &  94.1 &  88.9 &  81.1 &  83.5 &  74.6 &  79.2 &  62.4 &  66.2 &  63.5 &  69.0 &  59.8 &  63.7 &  58.5 &  63.3 &  96.3 &  97.2 &  94.9 &  96.0 & {\it   5.2} & {\it   5.5} & {\it   5.2} & {\it   5.2} \\ 
   & 300 &  98.8 &  98.1 &  94.5 &  93.7 &  92.8 &  91.4 &  82.3 &  84.7 &  84.0 &  85.6 &  78.5 &  80.6 &  78.6 &  80.6 &  99.4 & 100.0 &  99.2 &  99.9 & {\it   6.0} & {\it   4.6} & {\it   5.5} & {\it   4.8} \\ 
   & 400 &  99.9 &  99.7 &  98.5 &  98.6 &  97.7 &  97.8 &  95.1 &  94.9 &  95.8 &  96.0 &  92.7 &  92.6 &  92.3 &  92.6 &  99.9 &  99.9 &  99.9 &  99.9 & {\it   5.2} & {\it   4.4} & {\it   5.2} & {\it   3.8} \\ 
   & 500 & 100.0 & 100.0 &  99.0 & 100.0 &  98.9 & 100.0 &  98.3 &  99.0 &  98.5 &  99.1 &  97.3 &  98.1 &  97.2 &  98.1 & 100.0 & 100.0 & 100.0 & 100.0 & {\it   4.6} & {\it   4.4} & {\it   4.5} & {\it   4.3} \\ 
\bottomrule
\end{tabular}
\end{center}
\end{sidewaystable}  

In terms of rejection rates, the tests based on $S_n$ and $S_n^*$ are, overall, very close. From a practical perspective, recall that the latter are easier to implement. Notice also that the tests based on PB are, overall, more powerful than those based on MP for small $n$. However, as $n$ increases, the difference in rejection rate vanishes in most scenarios. These results suggest that the multiplier procedure can be safely used as a large-sample alternative to the parametric bootstrap in dimension one.

To complement the previous study, the levels of the tests were also investigated for heavy-tailed and strongly asymmetric distributions. More specifically, the levels of the tests were estimated from 1000 random samples of size $n$ generated from the standard $t$ distribution with fixed d.f.\ $\nu \in \{1, 2, 3, 4, 5\}$, and from the gamma distribution with rate parameter 0.5 and shape parameter in $\{1, 2, 4, 8, 16\}$. The obtained rejection rates (not reported) were found to be reasonably close to the 5\% nominal level in all scenarios and for all $n \in \{100,200,300,400,500\}$.

\subsection{Bivariate and trivariate experiments}

In the bivariate and trivariate simulations, only the goodness-of-fit procedures based on $S_n^*$ and $T_n^*$ were used, and five families of distributions were considered. In addition to the multivariate normal (abbreviated by N) and the multivariate $t$ distribution with five d.f.\ (abbreviated by T5), three absolutely continuous families were constructed from \citet{Skl59}'s representation theorem. The latter result states that any multivariate c.d.f.\ $F:\Rset^d \to [0,1]$ whose marginal c.d.f.s $F_1,\dots,F_d$ are continuous can be expressed in terms of a unique $d$-dimensional copula $C$ as
$$
F(x) = C\{F_1(x_1),\dots,F_d(x_d)\}, \qquad x = (x_1,\dots,x_d) \in \Rset^d.
$$
A first non-Gaussian family was obtained by taking $F_1,\dots,F_d$ to be gamma and $C$ to be a normal copula, a second choice was to take $F_1,\dots,F_d$ to be $t$ with five d.f.\ and $C$ to be a normal copula, while a third family was obtained by taking $F_1,\dots,F_d$ normal and $C$ to be a Clayton copula.  These three families are respectively abbreviated by GN, T5N and NC as we continue. Notice that the multivariate normal N is obtained by taking $F_1,\dots,F_d$ normal and $C$ normal, while T5, the multivariate $t$ with five d.f., is obtained by taking $F_1,\dots,F_d$ to be $t$ with five d.f.\ and $C$ to be a $t$ copula with five d.f.

In dimension two, all five families have five parameters: two parameters per margin and one parameter for the copula. In dimension three, N, T5, GN and T5N have nine parameters (two parameters per margin and three correlation coefficients for the copulas), while NC has seven parameters (two parameters per margin and one parameter for the Clayton copula). 

For data generation, the margins of the distributions N and NC were taken to be the $N(10,1)$, the gamma margins of GN were chosen to have shape and rate parameters equal to 98.671 and 9.866, respectively, while the margins of T5 and T5N were set to have expectation 10 and dispersion 0.856. In order to study the effect of the dependence on the tests, the correlation coefficients of the normal and $t$ copulas were first taken equal to 0.309 and then to 0.588, while the parameter of the Clayton copula used to construct NC was first taken equal to 0.5 and then to 1.333. This corresponds to requiring that the value of Kendall's tau (denoted by $\tau$ in the tables) for all bivariate margins of the copulas is first equal to 0.2 and then to 0.4.

For N and T5, random number generation and the computation of the p.d.f.\ and the c.d.f.\ was performed using the excellent {\tt mvtnorm} R package \citep{mvtnorm}. For NC, GN and T5N, the {\tt copula} package \citep{KojYan10} was used in addition.

The results for dimension two are given in Table~\ref{dim2}. The columns SW contain the rejection rates of the multivariate extension of the Shapiro--Wilk test proposed by \citet{VilGon09} and implemented in the R package {\tt mvShap\-iroTest} \citep{mvShapiroTest}. Unlike in dimension one, the comparison between the multiplier procedure and the parametric bootstrap was carried out only in the situation where a bivariate normal distribution is hypothesized. In that case, the maximum likelihood estimates are the sample mean and the sample covariance matrix, and these were used in the parametric bootstrap to decrease its computational cost. The rejection rates of the multiplier procedure and the parametric bootstrap are reported in the columns N-MP and N-PB, respectively. The parametric bootstrap appears clearly more powerful than the multiplier but, as in dimension one, the difference in rejection rate tends to vanish as $n$ reaches 500. Also, the comparison of the columns SW, N-MP and N-PB reveals, without surprise, that the specialized multivariate Shapiro--Wilk test is generally more powerful than its two generic competitors.

\begin{table}[t!]
\addtolength{\tabcolsep}{-3pt}
\caption{Rejection rate (in \%) of the null hypothesis in the bivariate case as observed in 1000 random samples of size $n=100$, 200, 300, 400 and 500.}
\label{dim2}
\begin{center}
\begin{tabular}{lr rrrrrrr rrrrrrr}
  \toprule
True &  & \multicolumn{7}{c}{$\tau = 0.2$} & \multicolumn{7}{c}{$\tau = 0.4$}\\
\cmidrule(lr){3-9}\cmidrule(lr){10-16}
dist  & $n$  &  SW & N-PB & N-MP & NC & T5N & T5 & GN & SW & N-PB & N-MP & NC & T5N & T5 & GN \\
  \midrule
N & 100 & {\it   3.4} & {\it   4.6} & {\it   4.3} &  15.3 &   7.7 &   8.8 &   3.6 & {\it   4.8} & {\it   4.3} & {\it   3.4} &  49.6 &   7.2 &   8.0 &   5.2 \\ 
   & 200 & {\it   5.9} & {\it   4.4} & {\it   4.1} &  30.9 &  21.4 &  22.5 &   8.1 & {\it   4.9} & {\it   4.2} & {\it   4.0} &  85.1 &  15.0 &  20.0 &   8.7 \\ 
   & 300 & {\it   5.6} & {\it   4.5} & {\it   4.0} &  42.8 &  36.8 &  38.7 &  11.7 & {\it   5.4} & {\it   4.5} & {\it   4.4} &  96.0 &  26.6 &  33.9 &  14.5 \\ 
   & 400 & {\it   5.5} & {\it   4.3} & {\it   4.1} &  55.6 &  47.9 &  49.6 &  18.2 & {\it   4.3} & {\it   5.7} & {\it   6.0} &  99.5 &  36.8 &  48.5 &  16.7 \\ 
   & 500 & {\it   5.3} & {\it   5.6} & {\it   5.3} &  62.7 &  61.7 &  64.4 &  20.9 & {\it   4.8} & {\it   4.5} & {\it   4.7} & 100.0 &  45.4 &  58.5 &  21.6 \\ 
   [1ex]
NC & 100 &   6.0 &  11.3 &   7.0 & {\it   3.7} &  20.2 &  23.1 &   8.2 &  31.0 &  26.6 &  12.3 & {\it   4.0} &  51.3 &  35.1 &  11.0 \\ 
   & 200 &   6.3 &  16.5 &  11.1 & {\it   4.0} &  60.5 &  59.5 &  20.9 &  58.0 &  43.7 &  24.6 & {\it   6.0} &  95.4 &  84.7 &  33.7 \\ 
   & 300 &  10.9 &  22.9 &  15.7 & {\it   4.8} &  82.1 &  80.0 &  40.8 &  77.2 &  62.5 &  40.0 & {\it   5.6} &  99.9 &  98.8 &  56.4 \\ 
   & 400 &  12.8 &  25.6 &  17.7 & {\it   4.3} &  95.6 &  93.4 &  51.4 &  91.5 &  75.2 &  56.2 & {\it   6.1} & 100.0 & 100.0 &  75.8 \\ 
   & 500 &  13.7 &  36.5 &  26.8 & {\it   4.7} &  99.0 &  98.2 &  64.3 &  96.2 &  85.8 &  69.0 & {\it   6.2} & 100.0 & 100.0 &  85.7 \\ 
   [1ex]
T5N & 100 &  76.5 &  50.6 &  30.5 &  53.8 & {\it   3.0} &   2.7 &  26.0 &  78.0 &  51.8 &  29.8 &  84.5 & {\it   2.1} &   2.6 &  26.0 \\ 
   & 200 &  96.2 &  78.6 &  65.3 &  90.9 & {\it   2.9} &   4.4 &  71.3 &  95.7 &  79.1 &  64.4 &  99.6 & {\it   4.3} &   5.6 &  67.7 \\ 
   & 300 &  99.5 &  92.3 &  85.9 &  98.9 & {\it   4.2} &   6.5 &  90.3 &  99.7 &  92.0 &  83.5 & 100.0 & {\it   4.7} &   8.9 &  87.3 \\ 
   & 400 &  99.8 &  97.0 &  94.6 & 100.0 & {\it   3.7} &   7.7 &  97.0 &  99.9 &  97.3 &  93.9 & 100.0 & {\it   3.8} &   9.4 &  96.7 \\ 
   & 500 & 100.0 &  99.3 &  97.7 &  99.9 & {\it   4.4} &   9.4 &  99.4 & 100.0 &  99.2 &  98.3 & 100.0 & {\it   4.7} &  13.5 &  98.8 \\ 
   [1ex]
T5 & 100 &  73.5 &  55.4 &  35.7 &  60.0 &   3.9 & {\it   2.2} &  38.9 &  75.1 &  56.3 &  33.8 &  83.6 &   4.5 & {\it   2.9} &  33.7 \\ 
   & 200 &  93.8 &  83.7 &  69.7 &  91.9 &   4.7 & {\it   3.8} &  76.5 &  93.9 &  80.8 &  67.4 &  99.3 &   4.2 & {\it   2.6} &  72.5 \\ 
   & 300 &  99.1 &  95.7 &  90.1 &  98.8 &   5.1 & {\it   2.9} &  93.8 &  99.2 &  93.8 &  86.8 & 100.0 &   6.8 & {\it   4.2} &  91.2 \\ 
   & 400 & 100.0 &  98.8 &  96.1 &  99.8 &   5.5 & {\it   3.6} &  98.4 &  99.6 &  97.8 &  94.3 & 100.0 &   5.9 & {\it   4.2} &  98.0 \\ 
   & 500 & 100.0 & 100.0 &  99.8 & 100.0 &   5.7 & {\it   3.5} & 100.0 & 100.0 &  99.5 &  98.4 & 100.0 &   5.4 & {\it   3.1} &  99.6 \\ 
   [1ex]
GN & 100 &  14.3 &  13.7 &  11.1 &  23.9 &   7.8 &   8.2 & {\it   4.5} &  15.8 &  14.4 &   9.8 &  60.5 &   5.6 &   7.0 & {\it   3.9} \\ 
   & 200 &  21.8 &  19.8 &  18.3 &  49.7 &  20.9 &  22.1 & {\it   5.2} &  22.0 &  17.8 &  16.1 &  90.1 &  17.9 &  21.7 & {\it   4.6} \\ 
   & 300 &  33.5 &  24.3 &  22.7 &  68.6 &  39.7 &  40.3 & {\it   4.6} &  34.1 &  23.7 &  22.6 &  98.1 &  31.5 &  39.1 & {\it   4.6} \\ 
   & 400 &  42.8 &  29.9 &  29.0 &  81.6 &  56.3 &  60.4 & {\it   4.5} &  44.6 &  29.4 &  27.1 &  99.8 &  49.0 &  56.0 & {\it   5.0} \\ 
   & 500 &  57.3 &  38.1 &  37.1 &  90.3 &  70.7 &  74.5 & {\it   4.7} &  52.9 &  35.0 &  34.0 & 100.0 &  62.2 &  67.4 & {\it   4.8} \\ 
  \bottomrule
\end{tabular}
\end{center}
\end{table}

By looking at the entries in italic in Table~\ref{dim2}, we see that the multiplier procedure is, overall, too conservative for smaller $n$, but that the agreement between the empirical levels and the 5\% significance level improves as $n$ increases. The effect of stronger dependence on the power is variable. For instance, when data are generated from NC, the families N, T5N, T5 and GMN are easier to reject for $\tau=0.4$ than for $\tau=0.2$, but, when the true distribution is N, it is easier to reject T5N and T5 in the case of weaker dependence. Notice finally that, as could have been expected, it is very difficult to distinguish between T5 from T5N for these sample sizes.

The results of the trivariate Monte Carlo simulations are given in Table~\ref{dim3}. This time, to make the computational cost of the simulation acceptable, only the multiplier procedure was used. By looking at the entries in italic, we can see, as in the bivariate case, that the tests are, overall, too conservative, but that the empirical levels improve as $n$ increases. A comparison with Table~\ref{dim2} also reveals that the rejection rates are higher in dimension three than in dimension two, which suggests that the differences between the distributions are easier to detect as $d$ increases from 2 to 3. From a more practical perspective, we see that the empirical powers approach 100\% as $n$ reaches 500 in many scenarios under the alternative hypothesis.

\begin{table}[ht]
\caption{Rejection rate (in \%) of the null hypothesis in the trivariate case as observed in 1000 random samples of size $n=100$, 200, 300, 400 and 500.}
\label{dim3}
\begin{center}
\begin{tabular}{lr rrrrrr rrrrrr}
  \toprule
True & & \multicolumn{6}{c}{$\tau = 0.2$} & \multicolumn{6}{c}{$\tau = 0.4$}\\
\cmidrule(lr){3-8}\cmidrule(lr){9-14}
dist  & $n$ &  SW & N & NC & T5N & T5 & GN & SW & N & NC & T5N & T5 & GN \\
  \midrule
N & 100 & {\it   5.0} & {\it   2.2} &  42.7 &   4.8 &   5.9 &   1.6 & {\it   5.2} & {\it   2.1} &  93.7 &   4.4 &   5.8 &   2.0 \\ 
   & 200 & {\it   5.1} & {\it   3.6} &  75.8 &  18.2 &  20.8 &   4.6 & {\it   5.6} & {\it   3.2} &  99.8 &  12.9 &  19.3 &   5.3 \\ 
   & 300 & {\it   6.0} & {\it   3.9} &  90.3 &  34.0 &  35.7 &   8.8 & {\it   7.1} & {\it   4.2} &  99.9 &  24.4 &  31.9 &  11.2 \\ 
   & 400 & {\it   4.7} & {\it   3.7} &  97.3 &  46.1 &  48.6 &  16.8 & {\it   5.6} & {\it   5.6} & 100.0 &  33.2 &  42.9 &  16.8 \\ 
   & 500 & {\it   5.2} & {\it   3.8} &  99.6 &  57.6 &  62.0 &  20.6 & {\it   5.7} & {\it   3.8} & 100.0 &  45.7 &  56.7 &  20.6 \\ 
   [1ex]
NC & 100 &  11.1 &   5.5 & {\it   3.9} &  35.3 &  32.7 &   8.9 &  80.0 &  16.5 & {\it   3.4} &  93.5 &  62.0 &  21.1 \\ 
   & 200 &  17.6 &  22.5 & {\it   3.6} &  90.9 &  83.9 &  44.6 &  98.9 &  54.7 & {\it   5.2} & 100.0 &  99.2 &  66.3 \\ 
   & 300 &  27.4 &  41.4 & {\it   5.1} &  99.2 &  96.6 &  71.1 & 100.0 &  81.7 & {\it   3.8} & 100.0 & 100.0 &  92.1 \\ 
   & 400 &  37.9 &  53.6 & {\it   4.6} &  99.8 &  99.7 &  85.9 & 100.0 &  95.2 & {\it   5.9} & 100.0 & 100.0 &  97.1 \\ 
   & 500 &  49.6 &  69.0 & {\it   4.5} & 100.0 &  99.9 &  92.7 & 100.0 &  98.5 & {\it   5.1} & 100.0 & 100.0 &  99.8 \\ 
   [1ex]
T5N & 100 &  87.2 &  23.7 &  82.5 & {\it   1.6} &   3.2 &  15.0 &  86.8 &  21.6 &  98.9 & {\it   2.7} &   5.1 &  12.5 \\ 
   & 200 &  98.6 &  68.1 &  99.4 & {\it   2.7} &   9.3 &  66.6 &  98.9 &  67.9 & 100.0 & {\it   3.4} &  13.4 &  61.3 \\ 
   & 300 &  99.9 &  87.6 &  99.9 & {\it   2.9} &  15.0 &  88.4 & 100.0 &  85.9 & 100.0 & {\it   4.6} &  22.7 &  86.9 \\ 
   & 400 & 100.0 &  96.2 & 100.0 & {\it   4.1} &  22.7 &  98.5 & 100.0 &  97.2 & 100.0 & {\it   5.4} &  30.4 &  96.6 \\ 
   & 500 & 100.0 &  98.9 & 100.0 & {\it   5.0} &  29.3 &  99.1 & 100.0 &  99.0 & 100.0 & {\it   6.2} &  37.8 &  99.5 \\ 
   [1ex]
T5 & 100 &  82.2 &  30.0 &  75.1 &   3.9 & {\it   1.6} &  26.9 &  83.3 &  29.3 &  97.7 &   4.9 & {\it   2.0} &  24.9 \\ 
   & 200 &  98.8 &  75.4 &  99.3 &   6.3 & {\it   3.1} &  78.3 &  98.1 &  75.5 & 100.0 &   7.0 & {\it   2.8} &  76.4 \\ 
   & 300 &  99.8 &  92.8 &  99.8 &   7.8 & {\it   3.5} &  95.6 &  99.9 &  91.3 & 100.0 &   8.2 & {\it   4.9} &  92.2 \\ 
   & 400 & 100.0 &  98.8 & 100.0 &   8.9 & {\it   4.3} &  99.6 & 100.0 &  97.2 & 100.0 &   8.2 & {\it   5.4} &  98.4 \\ 
   & 500 &  99.9 &  99.9 & 100.0 &  11.3 & {\it   5.1} & 100.0 & 100.0 &  99.1 & 100.0 &  12.6 & {\it   6.5} &  99.7 \\ 
   [1ex]
GN & 100 &  15.9 &   7.5 &  38.4 &   4.5 &   5.1 & {\it   2.3} &  14.1 &   5.0 &  92.7 &   4.0 &   4.6 & {\it   1.8} \\ 
   & 200 &  30.0 &  17.6 &  79.7 &  22.2 &  21.0 & {\it   4.0} &  25.8 &  14.8 &  99.9 &  25.0 &  20.1 & {\it   4.2} \\ 
   & 300 &  40.5 &  26.4 &  93.3 &  46.0 &  40.6 & {\it   3.3} &  42.5 &  23.8 & 100.0 &  48.3 &  38.1 & {\it   4.8} \\ 
   & 400 &  52.9 &  32.3 &  99.2 &  64.4 &  57.5 & {\it   4.5} &  54.6 &  25.4 & 100.0 &  70.3 &  52.0 & {\it   3.7} \\ 
   & 500 &  63.8 &  37.6 &  99.5 &  81.2 &  74.2 & {\it   4.3} &  64.9 &  32.8 & 100.0 &  82.0 &  69.3 & {\it   4.4} \\ 
  \bottomrule
\end{tabular}
\end{center}
\end{table}

\subsection{The case of the multivariate $t$ distribution} 

When $\MM$ is the set of c.d.f.s from the multivariate $t$ distribution with fixed d.f., two different ways of computing the gradients $\dot F_\theta$ and $\dot f_\theta / f_\theta$, $\theta \in \OO$, were considered. The first one is the generic approach mentioned earlier relying on Richardson's extrapolation method as implemented in the R package {\tt numDeriv}. This numerical method requires numerous evaluations of the c.d.f.\ and the log p.d.f.\ of the multivariate $t$ as it is based on finite differences. For that reason, in the trivariate experiments, the {\tt algorithm} parameter of the {\tt pmvt} and {\tt pmvnorm} functions of the {\tt mvtnorm} package was set to {\tt TVPACK} \citep{Gen04} instead of the default algorithm {\tt GenzBretz}. Indeed, the latter is based on randomized quasi Monte Carlo methods \citep{GenBre99,GenBre02} which implies that its results depend on random number generation and are therefore not fully reproducible. The second approach, which is expected to be more precise, is based, as explained in Appendix~\ref{mult_t}, on the fact that the gradient $\dot f_{\theta} / f_{\theta}$ can be computed analytically, and on the fact that the gradient $\dot F_\theta$ can be expressed in terms of the c.d.f.\ and the p.d.f.\ of the multivariate $t$. 

These two approaches were thoroughly compared and their results were found to be very close. As could have been expected, the second more analytical approach is much faster as it requires significantly fewer evaluations of the c.d.f.\ and the p.d.f.\ of the multivariate $t$. This aspect will be illustrated in the next section.

Let us finally discuss the estimation of the parameters of the multivariate $t$ with fixed d.f.\ $\nu$ using the Nelder-Mead algorithm. In dimension three for small $\nu \geq 3$, we noticed that the estimation of the nine parameters was extremely sensitive to the choice of the starting values, and that the multiplier test tended to be too liberal. Such issues were not observed for the other families of c.d.f.s used in the trivariate simulations. Improved results were obtained by changing the scale of the optimization using the {\tt parscale} argument of the R {\tt optim} routine. The latter argument was set to the vector of starting values with a guard against values too close to zero. Additional simulations were carried out for $\nu=3$, 5, 7, 10, 20 and 30 to study the empirical levels of the multiplier test. Table~\ref{sizes} gives such empirical levels when data are generated from bivariate and trivariate $t$ distributions with $\nu$ d.f.\ fitted from the financial data studied in Section~\ref{illustration}. As one can notice, in dimension two, the empirical levels are reasonably close to the 5\% significance level. In dimension three however, the test is clearly too liberal for $\nu=3$ and might be slightly too liberal for $\nu=5$. We could not determine whether a similar issue also affects the parametric bootstrap for computational reasons. We do however believe that the use of estimation procedures specifically tailored to the multivariate $t$ \citep[see e.g.][Section~3]{NadKot08} would solve this problem.  

\begin{table}[t!]
\caption{Rejection rate (in \%) of the null hypothesis that the data come from a $d$-dimensional $t$ distribution with fixed d.f.\ $\nu$ as observed in 1000 random samples of size $n=200$, 500, 1000 and 2000 generated from bivariate and trivariate $t$ distributions with $\nu$ d.f.\ fitted from the financial data used in Section~\ref{illustration}.}
\label{sizes}
\begin{center}
\begin{tabular}{r rrrrrr rrrrrr}
  \toprule
\multicolumn{1}{l}{$n$} & \multicolumn{6}{c}{$d = 2$} & \multicolumn{6}{c}{$d = 3$}\\
\cmidrule(lr){2-7}\cmidrule(lr){8-13}
$\nu$    &  3 & 5 & 7 & 10 & 20 & 30 & 3 & 5 & 7 & 10 & 20 & 30\\
  \midrule
200 & 4.2 & 3.4 & 3.1 & 3.5 & 4.1 & 4.4 & 8.6 & 3.2 & 3.7 & 2.2 & 2.3 & 3.3 \\ 
  500 & 4.7 & 4.5 & 5.6 & 4.7 & 4.7 & 4.1 & 11.8 & 5.0 & 3.7 & 4.5 & 4.0 & 3.5 \\ 
  1000 & 4.6 & 4.4 & 4.6 & 5.6 & 5.6 & 5.0 & 12.9 & 7.6 & 4.6 & 5.1 & 4.3 & 4.7 \\ 
  2000 & 6.1 & 4.2 & 5.2 & 5.6 & 4.5 & 4.8 & 11.1 & 6.6 & 4.8 & 5.2 & 4.9 & 4.8 \\ 
  \bottomrule
\end{tabular}
\end{center}
\end{table}

\section{Illustration}
\label{illustration}

The results of the univariate, bivariate and trivariate experiments whose results were partially reported in the previous section hence suggest that the multiplier procedure can be safely used as a large-sample alternative to the parametric bootstrap. 

To illustrate the computational advantage of the multiplier procedure over the parametric bootstrap, we consider the financial data analyzed in \citet[Chapter 5]{McNFreEmb05}. These consist of five years of daily log-returns (1996-2000) for the Intel (INTC), Microsoft (MSFT) and General Electric (GE) stocks, which gives a trivariate sample of size $n=1262$. Univariate goodness-of-fit tests for N, T5, T10, T20 and L were first applied to the Intel log-return data. The other distributions used in the univariate simulations were not considered as their support is $(0,\infty)$. Approximate $p$-values and execution times are reported in the first horizontal block of Table~\ref{illus}. The execution times are obtained from our R implementations of the multiplier procedure and of the parametric bootstrap. As one can see, the approximate $p$-values of the multiplier procedure and the parametric bootstrap are fairly close, and the execution times are of the same order of magnitude. The latter observation is essentially due to the fact that (i)~the numerical estimation of the parameters of the hypothesized distributions (on which the parametric bootstrap heavily relies) is reasonably fast in dimension one, and (ii)~the gradients needed in the multiplier procedure are computed numerically using the {\tt numDeriv} package as explained previously. Bivariate goodness-of-fit tests for N, NC, T10N, T5, T10 and T20 were then applied to the bivariate log-returns of the Intel and General Electric stocks. Again, the approximate $p$-values of the multiplier procedure and the parametric bootstrap are fairly close. This time, however, the computational advantage of the multiplier procedure is obvious, in particular when T10N is hypothesized (1.6 minutes versus 4.2 hours for the parametric bootstrap on one 2.33 GHz processor). The approximate $p$-values and executions times in italic in Table~\ref{illus} were obtained using the multiplier procedure based on the gradients computed using the more analytical approach described in Appendix~\ref{mult_t}. Finally, goodness-of-fit tests for N, NC, T10N, T5, T10 and T20 were applied to the trivariate log-returns. From the third horizontal block of Table~\ref{illus}, we see that the computational advantage of the multiplier is even more pronounced than in dimension two. For T10N for instance, the execution of the multiplier procedure took 4.3 minutes while 16.6 hours were necessary to obtain an approximate $p$-value using the parametric bootstrap.

\begin{table}[t!]
\caption{Approximate $p$-values and executions times on one 2.33 GHz processor of the multiplier procedure (MP) and the parametric bootstrap (PB) for the financial data analyzed in \citet[Chapter 5]{McNFreEmb05}. The approximate $p$-values and executions times in italic were obtained using MP in which the gradients are computed using the more analytical approach described in Appendix~\ref{mult_t}.}
\label{illus}
\begin{center}
\begin{tabular}{lr rrrr}
  \toprule
Variables & Distribution & \multicolumn{2}{c}{$p$-value} & \multicolumn{2}{c}{Time in seconds}\\
\cmidrule(lr){3-4}\cmidrule(lr){5-6}
    &     &  MP & PB & MP & PB\\
  \midrule
  INTC & N & 0.000 & 0.000 & 9.5 & 14.0 \\ 
  & T5 & 0.066 & 0.077 & 9.3 & 43.2 \\ 
  & T10 & 0.538 & 0.520 & 9.3 & 40.1 \\ 
  & T20 & 0.034 & 0.017 & 9.3 & 40.4 \\ 
  & L & 0.461 & 0.405 & 9.2 & 18.7 \\
  [1ex] 
  (INTC, GE) & N & 0.000 & 0.000 & 79.2 & 1934.9 \\ 
  & NC & 0.000 & 0.000 & 35.6 & 12185.0 \\ 
  & T10N & 0.022 & 0.012 & 98.3 & 14969.8 \\ 
  & T5 & {\it 0.043} & 0.026 & {\it 8.0} & 2089.8 \\ 
  & T10 & {\it 0.187} & 0.184 & {\it 9.0} & 2093.1 \\  
  & T10 & 0.200 & 0.181 & 95.6 & 2238.1 \\ 
  & T20 & {\it 0.003} & 0.004 & {\it 9.2} & 2088.0 \\ 
  [1ex]
  (INTC, GE, MSFT) & N & 0.000 & 0.000 & 166.1 & 3578.3 \\ 
  & NC & 0.000 & 0.000 & 93.9 & 27954.9 \\ 
  & T10N & 0.000 & 0.000 & 256.0 & 59902.1 \\ 
  & T5 & {\it 0.077} & 0.097 & {\it 15.5} & 4340.0 \\ 
  & T10 & {\it 0.119} & 0.139 & {\it 14.9} & 4459.4 \\
  & T10 & 0.133 & 0.149 & 252.8 & 4740.8 \\  
  & T20 & {\it 0.004} & 0.021 & {\it 14.4} & 4476.4 \\
\bottomrule
\end{tabular}
\end{center}
\end{table}

The entries in italic in Table~\ref{illus} show that the execution times of the generic implementation of the multiplier procedure based on the {\tt numDeriv} package can be significantly lowered at the expense of more analytical and programming work. Formulas similar to or simpler than those given in Appendix~\ref{mult_t} could be obtained for all the multivariate distributions considered in this work.

The approximate $p$-values given in the third horizontal block of Table~\ref{illus} indicate that there is very little evidence against the trivariate distributions T5 and T10. On the basis of these tests, we can conclude that a trivariate $t$ distribution whose d.f.\ are close to 10 is a plausible model for these financial data. 

\section*{Acknowledgments}

The authors would like to thank the associate editor and a referee for their insightful and constructive comments as well as Laurent Bordes for fruitful discussions.

\appendix

\section{Proofs of the propositions}
\label{proofs}

The following lemma will be used in the proofs of Propositions~\ref{propH0} and~\ref{propH1}.

\begin{lem}
\label{lem1}
Let $X_1,\dots,X_n$ be a random sample from a distribution $P$ that may or may not belong to $\{P_\theta : \theta \in \OO\}$. If Assumptions~(A1)-(A4) are satisfied, and if there exists $\theta_0 \in \OO$ such that $\theta_n$ converges in probability to $\theta_0$ under $P$, then
$$
\left(f \mapsto  \Gn f - \Gn \psi_{\theta_0}^\top \dot P_{\theta_0}f, f \mapsto \Gn'f - \Gn' \psi_{\theta_n}^\top \dot P_{\theta_n}f, f \mapsto \Gn''f - \Gn'' \psi_{\theta_n}^\top \dot P_{\theta_n}f \right) 
$$
converges weakly to 
$$
\left(f \mapsto \G_Pf - \G_P \psi_{\theta_0}^\top \dot P_{\theta_0}f , f \mapsto \G_P'f - \G_P' \psi_{\theta_0}^\top \dot P_{\theta_0}f, f \mapsto \G_P'f - \G_P' \psi_{\theta_0}^\top \dot P_{\theta_0}f \right)
$$
in $\{ \ell^\infty(\FF) \}^3$, where $\G_P$, a $P$-Brownian bridge, is the weak limit of $\Gn$, and $\G_P'$ is an independent copy of $\G_P$.
\end{lem}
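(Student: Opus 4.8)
The plan is to express the stated triple as a continuous linear image of the joint weak limit of the three driving processes $(\Gn,\Gn',\Gn'')$, after first replacing the estimated index $\theta_n$ by $\theta_0$ inside the drift terms at an asymptotically negligible cost. First I would enlarge the index class to $\FF \cup \{\psi_{\theta_0,1},\dots,\psi_{\theta_0,p}\}$, where $\psi_{\theta_0,j}$ is the $j$-th component of $\psi_{\theta_0}$. By (A3)--(A4) each $\psi_{\theta_0,j}$ lies in $L^2(P)$, and adjoining finitely many square-integrable functions to a $P$-Donsker class leaves it $P$-Donsker. On this enlarged class I would invoke the multiplier central limit theorem \citep[see e.g.][]{vanWel96}, applicable since the $Z_i$ are i.i.d., centered, of unit variance, satisfy $\int_0^\infty\{P(|Z_1|>x)\}^{1/2}\,\dd x<\infty$ and are independent of the sample; this yields the joint weak convergence $(\Gn,\Gn')\leadsto(\G_P,\G_P')$ with $\G_P'$ an independent copy of $\G_P$. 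For $\Gn''$ I would use the exact identity $\Gn''f-\Gn'f=-\bar Z\,\Gn f$, so that $\sup_{f}|\Gn''f-\Gn'f|\le|\bar Z|\sup_f|\Gn f|=O_P(n^{-1/2})\,O_P(1)=o_P(1)$; hence $\Gn''$ shares the limit $\G_P'$ and $(\Gn,\Gn',\Gn'')\leadsto(\G_P,\G_P',\G_P')$ in $\{\ell^\infty(\FF\cup\{\psi_{\theta_0,j}\})\}^3$. In particular $\Gn\psi_{\theta_0}$, $\Gn'\psi_{\theta_0}$ and $\Gn''\psi_{\theta_0}$ converge jointly with the corresponding processes.

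Next I would remove $\theta_n$ from the last two components. Writing
$$
\Gn'\psi_{\theta_n}^\top\dot P_{\theta_n}f-\Gn'\psi_{\theta_0}^\top\dot P_{\theta_0}f
=(\Gn'\psi_{\theta_n}-\Gn'\psi_{\theta_0})^\top\dot P_{\theta_n}f+(\Gn'\psi_{\theta_0})^\top(\dot P_{\theta_n}f-\dot P_{\theta_0}f),
$$
I would control the second summand by (A2): since $\theta_n\overset{P}{\to}\theta_0$, we have $\sup_f\|\dot P_{\theta_n}f-\dot P_{\theta_0}f\|\overset{P}{\to}0$, while $\Gn'\psi_{\theta_0}=O_P(1)$. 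For the first summand, (A4) together with consistency gives $\int\|\psi_{\theta_n}-\psi_{\theta_0}\|^2\,\dd P\overset{P}{\to}0$, and since $\{\psi_\theta:\|\theta-\theta_0\|<\delta\}$ is $P$-Donsker by (A3), the asymptotic equicontinuity of the (ordinary and multiplier) empirical processes with respect to the $L^2(P)$ seminorm gives $\Gn'\psi_{\theta_n}-\Gn'\psi_{\theta_0}\overset{P}{\to}0$; combined with $\sup_f\|\dot P_{\theta_n}f\|=O_P(1)$, which follows from (A2) and $\sup_f\|\dot P_{\theta_0}f\|<\infty$ (itself a consequence of the Fréchet differentiability in (A1)), this summand is $o_P(1)$ uniformly in $f$. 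The identical argument applies to $\Gn''$, so the second and third components of the stated triple differ from $f\mapsto\Gn'f-\Gn'\psi_{\theta_0}^\top\dot P_{\theta_0}f$ and $f\mapsto\Gn''f-\Gn''\psi_{\theta_0}^\top\dot P_{\theta_0}f$ by a uniform $o_P(1)$.

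Finally I would apply the continuous mapping theorem. The map $\Phi$ sending $G\in\ell^\infty(\FF\cup\{\psi_{\theta_0,j}\})$ to $(f\mapsto Gf-(G\psi_{\theta_0})^\top\dot P_{\theta_0}f)\in\ell^\infty(\FF)$ is linear and bounded, because $\sup_f\|\dot P_{\theta_0}f\|<\infty$ and $G\mapsto G\psi_{\theta_0}$ is a bounded evaluation; the product map $\Phi\times\Phi\times\Phi$ is therefore continuous on $\{\ell^\infty\}^3$. Applying it to the weak limit $(\G_P,\G_P',\G_P')$ produces exactly the three drifted limits in the statement, the second and third coinciding since both equal $\Phi(\G_P')$. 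A Slutsky argument absorbing the $o_P(1)$ terms of the previous step then delivers the claimed weak convergence. I expect the main obstacle to be the equicontinuity step with the \emph{random} argument $\psi_{\theta_n}$: one must combine the stochastic-equicontinuity modulus over the Donsker class $\{\psi_\theta\}$ with the probabilistic statement $\int\|\psi_{\theta_n}-\psi_{\theta_0}\|^2\,\dd P\overset{P}{\to}0$, and verify that the multiplier process $\Gn'$, not only $\Gn$, enjoys this equicontinuity---which is precisely where the moment condition on the multipliers and hypothesis (A3) are used.
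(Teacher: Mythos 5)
Your proposal is correct and follows essentially the same route as the paper's proof: enlarge the index class to $\FF \cup \{\psi_{\theta_0,1},\dots,\psi_{\theta_0,p}\}$, invoke the multiplier central limit theorem for the joint convergence of $(\Gn,\Gn',\Gn'')$, use the Donsker property of $\{\psi_\theta : \|\theta-\theta_0\|<\delta\}$ from (A3) together with the $L_2(P)$-convergence from (A4) to show $\Gn'\psi_{\theta_n}-\Gn'\psi_{\theta_0}=o_P(1)$ (your appeal to asymptotic equicontinuity is exactly what the paper formalizes via the continuity of $(z,\psi)\mapsto z(\psi)-z(\psi_{\theta_0,k})$ and the continuous mapping theorem, as in Lemma~19.24 of van der Vaart), use (A2) for the gradient, and conclude by continuous mapping. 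The only substantive divergence is your treatment of $\Gn''$ through the exact identity $\Gn''f-\Gn'f=-\bar Z\,\Gn f$, a nice elementary shortcut where the paper instead cites the corresponding multiplier result (Corollary~10.3 of Kosorok) directly; both are valid.
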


\begin{proof}
From Assumption (A3), $\{\psi_{\theta_0}\}$ is $P$-Donsker. It follows that the class $\GG$ obtained as the union of $\FF$ and the $p$ components of $\psi_{\theta_0}$ is $P$-Donsker. From the functional multiplier central limit theorem \citep[see e.g.][Theorem 10.1 and Corollary 10.3]{Kos08}, we then have that
$$
\left(\Gn,\Gn',\Gn'' \right) \leadsto \left(\G_P,\G_P',\G_P' \right)
$$
in $\{ \ell^\infty(\GG) \}^3$. By the continuous mapping theorem, it first follows that
$$
\left( \Gn, \Gn\psi_{\theta_0}, \Gn, \Gn'\psi_{\theta_0}, \Gn, \Gn''\psi_{\theta_0} \right) \leadsto \left( \G_{\theta_0}, \G_{\theta_0}\psi_{\theta_0}, \G_{\theta_0}', \G_{\theta_0}'\psi_{\theta_0}, \G_{\theta_0}', \G_{\theta_0}'\psi_{\theta_0} \right)
$$
in $\{ \ell^\infty(\FF) \times \Rset^p \}^3$, and then that
\begin{equation}
\label{conv1}
\left(f \mapsto  \Gn f - \Gn \psi_{\theta_0}^\top \dot P_{\theta_0}f, f \mapsto \Gn'f - \Gn' \psi_{\theta_0}^\top \dot P_{\theta_0}f, f \mapsto \Gn''f - \Gn'' \psi_{\theta_0}^\top \dot P_{\theta_0}f \right) 
\end{equation}
converges weakly to 
\begin{equation}
\label{conv2}
\left(f \mapsto \G_Pf - \G_P \psi_{\theta_0}^\top \dot P_{\theta_0}f , f \mapsto \G_P'f - \G_P' \psi_{\theta_0}^\top \dot P_{\theta_0}f, f \mapsto \G_P'f - \G_P' \psi_{\theta_0}^\top \dot P_{\theta_0}f \right)
\end{equation}
in $\{ \ell^\infty(\FF) \}^3$. 

Now, from Assumption (A3), there exists a $\delta > 0$ such that $\HH = \{\psi_\theta : \|\theta - \theta_0 \| < \delta \}$ is $P$-Donsker. Let $\HH_k$, $k \in \{1,\dots,p\}$, be the $p$ component classes of $\HH$. They are $P$-Donsker by definition. 

Next, fix $k \in \{1,\dots,p\}$ and let $g$ be a function from $\ell^\infty(\HH_k) \times \HH_k \to \Rset$ defined by $g(z,\psi) = z(\psi) - z(\psi_{\theta_0,k})$, where $\psi_{\theta_0,k}$ is the $k$th component of $\psi_{\theta_0}$. As noted in \citet[proof of Lemma 19.24]{van98}, the set $\HH_k$ is a semimetric space with respect to metric $L_2(P)$ and the function $g$ is continuous with respect to the product semimetric on $\ell^\infty(\HH_k) \times \HH_k$ at every point $(z,\psi)$ such that $\psi \mapsto z(\psi)$ is continuous.

From Assumption (A4), the fact that $\theta_n$ converges to $\theta_0$ in probability, and Lemma 2.12 of \citet{van98}, we also have that $\psi_{\theta_n,k}$ converges in probability to $\psi_{\theta_0,k}$ in the space $\HH_k$ equipped with the metric $L_2(P)$. Since $\HH_k$ is $P$-Donsker, $\Gn' \leadsto \G_P'$ in $\ell^\infty(\HH_k)$. Also, since $\theta_n$ converges to $\theta_0$ in probability, the probability that $\psi_{\theta_n,k}$ is in $\HH_k$ tends to 1. On that event, it follows that $(\Gn',\psi_{\theta_n,k}) \leadsto (\G_P',\psi_{\theta_0,k})$ in $\ell^\infty(\HH_k) \times \HH_k$. Since $\psi_k \mapsto \G_P' \psi_k $ is continuous at every $\psi_k \in \HH_k$ almost surely, the function $g$ is continuous at almost every $(\G_P',\psi_{\theta_0,k})$. By the continuous mapping theorem, we obtain that $g(\Gn',\psi_{\theta_n,k}) = \Gn'\psi_{\theta_n,k} - \Gn'\psi_{\theta_0,k} \leadsto g(\G_P',\psi_{\theta_0,k}) = 0$. Hence, we have that 
\begin{equation}
\label{asym_equi2}
\Gn'\psi_{\theta_n,k} - \Gn'\psi_{\theta_0,k} = o_P (1) , \qquad k \in \{1,\dots,p\}.
\end{equation}
Similarly, we have that
\begin{equation}
\label{asym_equi3}
\Gn''\psi_{\theta_n,k} - \Gn''\psi_{\theta_0,k} = o_P (1), \qquad k \in \{1,\dots,p\}.
\end{equation}
From Assumption (A2) and the fact $\theta_n$ converges to $\theta$ in probability, we also have that 
\begin{equation}
\label{asym_equi4}
\sup_{f \in \FF} \| \dot P_{\theta_n} f - \dot P_{\theta_0} f \| = o_P (1).
\end{equation}
Finally, combining~(\ref{conv1}) and~(\ref{conv2}) with~(\ref{asym_equi2}),~(\ref{asym_equi3}) and~(\ref{asym_equi4}), we obtain the desired result.
\end{proof}


We can now prove Proposition~\ref{propH0} and~\ref{propH1}.

\begin{proof}[\bf Proof of Proposition~\ref{propH0}] Assumption (A5) implies that $\theta_n$ converges to $\theta_0$ in probability. From Assumption (A1) and Lemma 2.12 of \citet{van98}, we then have that
$$
 \sup_{f \in \FF} | P_{\theta_n}f - P_{\theta_0}f - (\theta_n - \theta_0)^\top \dot P_{\theta_0}f | = \| \theta_n - \theta_0 \| o_{ P_{\theta_0}} (1),
$$
which in turn is implies that
$$
 \sup_{f \in \FF} | \sqrt{n} (P_{\theta_n}f - P_{\theta_0}f) -\sqrt{n} (\theta_n - \theta_0)^\top \dot P_{\theta_0}f | =  o_{ P_{\theta_0}} (1),
$$
since $\| \sqrt{n}  (\theta_n - \theta_0) \| =  O_{ P_{\theta_0}} (1)$ from Assumption (A5) and the continuous mapping theorem. It follows that
\begin{align*}
\sqrt{n} (\Pn - P_{\theta_n}) &= \sqrt{n} (\Pn - P_{\theta_0}) - \sqrt{n} (P_{\theta_n} - P_{\theta_0}) \\
&= \sqrt{n} (\Pn - P_{\theta_0}) - \sqrt{n} (\theta_n - \theta_0)^\top \dot P_{\theta_0} + R_n,
\end{align*}
where $\sup_{f \in \FF} |R_n f| =  o_{ P_{\theta_0}} (1)$. Using Assumption (A5) again, we obtain that
\begin{equation}
\label{asym_equi1}
\sqrt{n} (\Pn f - P_{\theta_n} f) = \Gn f - \Gn \psi_{\theta_0}^\top \dot P_{\theta_0}f + Q_n f, \qquad f \in \FF,
\end{equation}
where $\sup_{f \in \FF} |Q_n f| =  o_{ P_{\theta_0}} (1)$. The result is finally an immediate consequence of Lemma~\ref{lem1}.
\end{proof}


\begin{proof}[\bf Proof of Proposition~\ref{propH1}]
Write
\begin{equation}
\label{decomposition}
\sqrt{n} (\Pn - P_{\theta_n}) = \sqrt{n} (\Pn - P) -  \sqrt{n} (P_{\theta_n} - P_{\theta_0}) - \sqrt{n} (P_{\theta_0} - P).
\end{equation}
The first term converges weakly to $\G_P$ in $\ell^\infty(\FF)$. Now, the convergence in distribution of $\sqrt{n} (\theta_n - \theta_0)$ implies that $\theta_n$ converges in probability to $\theta_0$. Hence, proceeding as in the proof of Proposition~\ref{propH0}, Assumption (A1) implies that
$$
 \sup_{f \in \FF} | \sqrt{n} (P_{\theta_n}f - P_{\theta_0}f) -\sqrt{n} (\theta_n - \theta_0)^\top \dot P_{\theta_0}f | =  o_P (1),
$$
which in turn implies that the second term in~(\ref{decomposition}) converges weakly in $\ell^\infty(\FF)$. However, since $P \not \in \{P_\theta : \theta \in \OO\}$, the supremum over $\FF$ of the third term diverges, which implies that  
$$
\sup_{f \in \FF} | \sqrt{n} (\Pn f - P_{\theta_n} f) | \overset{P}{\to} \infty.
$$

The second part of the proposition is an immediate consequence of Lemma~\ref{lem1}.
\end{proof}

\section{Computational details for the multivariate $t$}
\label{mult_t}

The p.d.f.\ of the centered $d$-dimensional multivariate $t$ with dispersion matrix $\Sigma$ and $\nu$ d.f.\ is given by
\begin{equation}
\label{pdf_t}
t_{\nu,\Sigma}(x) =  \frac{\Gamma \left( \frac{\nu+d}{2} \right)}{(\pi \nu)^{\frac{d}{2}} \Gamma\left(\frac{\nu}{2}\right) |\Sigma|^{\frac{1}{2}}} \left( 1 + \frac{1}{\nu} x^\top \Sigma^{-1} x\right)^{-\frac{\nu+d}{2}}, \qquad x \in \Rset^d.
\end{equation}
Let $T_{\nu,\Sigma}$ denote the corresponding c.d.f. It is easy to verify that the c.d.f.\ of the multivariate $t$ with $\nu$ d.f., expectation vector $(\mu_1,\dots,\mu_d)$, dispersions $\lambda_1,\dots,\lambda_d$ and correlation matrix $\Sigma$ is then given by 
$$
T_{\nu,\Sigma,\mu,\lambda}(x) = T_{\nu,\Sigma} \left( \frac{x_1-\mu_1}{\lambda_1},\dots, \frac{x_d-\mu_d}{\lambda_d} \right), \qquad x \in \Rset^d.
$$
The corresponding p.d.f.\ is thus
$$
t_{\nu,\Sigma,\mu,\lambda}(x) = \left( \prod_{j=1}^d \lambda_j \right)^{-1} t_{\nu,\Sigma} \left( \frac{x_1-\mu_1}{\lambda_1},\dots, \frac{x_d-\mu_d}{\lambda_d} \right), \qquad x \in \Rset^d.
$$

Let us first explain how, for any $x \in \Rset^d$, the gradient of $T_{\nu,\Sigma,\mu,\lambda}(x)$ with respect to all the parameters except $\nu$ can be computed. Let $j \in \{1,\dots,d\}$, and, for any $x \in \Rset^d$, let $T_{\nu,\Sigma}^{(j)}(x) = \partial T_{\nu,\Sigma}(x) / \partial x_j$. Also, let $\Sigma_{-j,-j}$ be a $(d-1) \times (d-1)$ matrix obtained from $\Sigma$ by removing its $j$th row and $j$th column, $\Sigma_{-j,j}$ be a $(d-1) \times 1$ matrix obtained from $\Sigma$ by removing its $j$th row and keeping only its $j$th column, and $\Sigma_{j,-j} = \Sigma_{-j,j}^\top $. From \citet[page 66]{NadKot05}, if $X$ is standard multivariate $t$ with $\nu$ d.f.\ and correlation matrix $\Sigma$, then, conditionally on $X_j=x_j$, we have that 
$$
\sqrt{\frac{\nu+1}{\nu+x_j^2}} \left( X_{-j} - x_j \Sigma_{-j,j}\right)
$$
is multivariate centered $t$ with $\nu+1$ degrees of freedom and dispersion matrix $\Lambda_j = \Sigma_{-j,-j} - \Sigma_{-j,j} \Sigma_{j,-j}$. Hence,
$$
T_{\nu,\Sigma}^{(j)}(x)= t_\nu(x_j) \, T_{\nu+1,\Lambda_j} \left( \sqrt{\frac{\nu+1}{\nu+x_j^2}} (x_{-j} - x_j \Sigma_{-j,j}) \right), \qquad x \in \Rset^d.
$$
Using the previous expression, it is therefore possible to compute
$$
\frac{\partial T_{\nu,\Sigma,\mu,\lambda}(x)}{\partial \mu_j} = - \lambda_j^{-1}T_{\nu,\Sigma}^{(j)}\left( \frac{x_1-\mu_1}{\lambda_1},\dots, \frac{x_d-\mu_d}{\lambda_d} \right), \qquad x \in \Rset^d,
$$
and 
$$
\frac{\partial T_{\nu,\Sigma,\mu,\lambda}(x)}{\partial \lambda_j^2} = - \frac{x_j - \mu_j}{2 \lambda_j^3} T_{\nu,\Sigma}^{(j)}\left( \frac{x_1-\mu_1}{\lambda_1},\dots, \frac{x_d-\mu_d}{\lambda_d} \right), \qquad x \in \Rset^d.
$$
Also, let $\rho_{i,j}$ be an off-diagonal element of the correlation matrix $\Sigma$. Then,
$$
\frac{\partial T_{\nu,\Sigma,\mu,\lambda}(x)}{\partial \rho_{i,j}} = \frac{\partial T_{\nu,\Sigma}\left( \frac{x_1-\mu_1}{\lambda_1},\dots, \frac{x_d-\mu_d}{\lambda_d} \right)}{\partial \rho_{i,j}}, \qquad x \in \Rset^d,
$$
and, for any $x \in \Rset^d$, $\partial T_{\nu,\Sigma}(x) / \partial \rho_{i,j}$ can be computed using the Plackett formula for the multivariate $t$ \citep[see][Proposition~1]{Gen04,KojYan11}. 

Let us now discuss the computation, for any $x \in \Rset^d$, of the gradient of $\log t_{\nu,\Sigma,\mu,\lambda}(x)$ with respect to all the parameters except $\nu$. Let $j \in \{1,\dots,d\}$, and, for any $x \in \Rset^d$, let $t_{\nu,\Sigma}^{(j)}(x) = \partial t_{\nu,\Sigma}(x) / \partial x_j$. Starting from~(\ref{pdf_t}), one obtains that 
$$
t^{(j)}_{\nu,\Sigma}(x) =  - \frac{(\nu+d) x^\top \Sigma^{-1} e_j}{\nu + x^\top \Sigma^{-1} x} t_{\nu, \Sigma}(x), \qquad x \in \Rset^d, 
$$
where $e_j$ is the unit vector of $\Rset^d$ whose $i$th component is 1 if $i=j$ and 0 otherwise. Using the previous expression, it is therefore possible to compute
$$
\frac{\partial \log t_{\nu,\Sigma,\mu,\lambda}(x)}{\partial \mu_j} = - \lambda_j^{-1} \frac{t_{\nu,\Sigma}^{(j)} \left( \frac{x_1-\mu_1}{\lambda_1},\dots, \frac{x_d-\mu_d}{\lambda_d} \right)}{t_{\nu,\Sigma} \left( \frac{x_1-\mu_1}{\lambda_1},\dots, \frac{x_d-\mu_d}{\lambda_d} \right)}, \qquad x \in \Rset^d,
$$
and 
$$
\frac{\partial \log t_{\nu,\Sigma,\mu,\lambda}(x)}{\partial \lambda_j^2} = - \frac{1}{2 \lambda_j^2} - \frac{x_j - \mu_j}{2 \lambda_j^3} \frac{t_{\nu,\Sigma}^{(j)} \left( \frac{x_1-\mu_1}{\lambda_1},\dots, \frac{x_d-\mu_d}{\lambda_d} \right)}{t_{\nu,\Sigma} \left( \frac{x_1-\mu_1}{\lambda_1},\dots, \frac{x_d-\mu_d}{\lambda_d} \right)}, \qquad x \in \Rset^d.
$$
Finally, starting again from~(\ref{pdf_t}), one obtains that
$$
\frac{\partial t_{\nu, \Sigma}(x)}{\partial \rho_{i,j}}  = - \frac{1}{2} t_{\nu, \Sigma}(x) \left\{ \frac{1}{| \Sigma|} \frac{\partial | \Sigma|}{\partial \rho_{i,j}} + \frac{(\nu + d) x^\top \frac{\partial  \Sigma^{-1}}{\partial \rho_{i,j}} x}{\nu + x^\top  \Sigma^{-1} x} \right\}, \qquad x \in \Rset^d.
$$
From \citet[Chap. 17]{Seb08} for instance, we have that
$$
\frac{\partial | \Sigma|}{\partial \rho_{i,j}} = 2 K_{ij}
\qquad \mbox{and that} \qquad
\frac{\partial  \Sigma^{-1}}{\partial \rho_{i,j}} = - r_i r_j^\top - r_j r_i^\top,
$$
where $K_{ij}$ is the cofactor of $\rho_{i,j}$, and where $r_i$ is the $i$-th column of $\Sigma^{-1}$.

\bibliographystyle{plainnat}
\bibliography{biblio}

\begin{thebibliography}{36}
\providecommand{\natexlab}[1]{#1}
\providecommand{\url}[1]{\texttt{#1}}
\expandafter\ifx\csname urlstyle\endcsname\relax
  \providecommand{\doi}[1]{doi: #1}\else
  \providecommand{\doi}{doi: \begingroup \urlstyle{rm}\Url}\fi

\bibitem[Burke(2000)]{Bur00}
M.D. Burke.
\newblock Multivariate tests-of-fit and uniform confidence bands using a
  weighted bootstrap.
\newblock \emph{Statistics and Probability Letters}, 46:\penalty0 13--20, 2000.

\bibitem[Darling(1955)]{Dar55}
D.A. Darling.
\newblock The {C}ram\'er--{S}mirnov test in the parametric case.
\newblock \emph{Annals of Mathematical Statistics}, 26:\penalty0 1--20, 1955.

\bibitem[Durbin(1973)]{Dur73}
J.~Durbin.
\newblock Weak convergence of the sample distribution function when parameters
  are estimated.
\newblock \emph{Annals of Statistics}, 1:\penalty0 279--290, 1973.

\bibitem[Durbin(1975)]{Dur75}
J.~Durbin.
\newblock {K}olmogorov--{S}mirnov tests when parameters are estimated with
  applications to tests of exponentiality and tests on spacings.
\newblock \emph{Biometrika}, 62\penalty0 (1):\penalty0 5--22, 1975.

\bibitem[Genest and R\'emillard(2008)]{GenRem08}
C.~Genest and B.~R\'emillard.
\newblock Validity of the parametric bootstrap for goodness-of-fit testing in
  semiparametric models.
\newblock \emph{Annales de l'Institut Henri Poincar\'e: Probabilit\'es et
  Statistiques}, 44:\penalty0 1096--1127, 2008.

\bibitem[Genz(2004)]{Gen04}
A.~Genz.
\newblock Numerical computation of rectangular bivariate and trivariate normal
  and $t$ probabilities.
\newblock \emph{Statistics and Computing}, 14:\penalty0 251--260, 2004.

\bibitem[Genz and Bretz(1999)]{GenBre99}
A.~Genz and F.~Bretz.
\newblock Numerical computation of multivariate t-probabilities with
  application to power calculation of multiple contrasts.
\newblock \emph{Journal of Statistical Computation and Simulation},
  63:\penalty0 361--378, 1999.

\bibitem[Genz and Bretz(2002)]{GenBre02}
A.~Genz and F.~Bretz.
\newblock Methods for the computation of multivariate t-probabilities.
\newblock \emph{Journal of Computational and Graphical Statistics},
  11:\penalty0 950--971, 2002.

\bibitem[Genz et~al.(2011)Genz, Bretz, Miwa, Mi, Leisch, Scheipl, and
  Hothorn]{mvtnorm}
A.~Genz, F.~Bretz, T.~Miwa, X.~Mi, F.~Leisch, F.~Scheipl, and T.~Hothorn.
\newblock \emph{mvtnorm: {M}ultivariate normal and $t$ distribution}, 2011.
\newblock URL \url{http://CRAN.R-project.org/package=mvtnorm}.
\newblock R package version 0.9-9991.

\bibitem[Genz and Haeusler(2006)]{GenHae06}
M.~Genz and E.~Haeusler.
\newblock Empirical processes with estimated parameters under auxiliary
  information.
\newblock \emph{Journal of Computational and Applied Mathematics},
  186:\penalty0 191--216, 2006.

\bibitem[Gilbert(2011)]{numDeriv}
Paul Gilbert.
\newblock \emph{numDeriv: Accurate Numerical Derivatives}, 2011.
\newblock URL \url{http://CRAN.R-project.org/package=numDeriv}.
\newblock R package version 2010.11-1.

\bibitem[{Gonzalez-Estrada} and {Villasenor-Alva}(2009)]{mvShapiroTest}
E.~{Gonzalez-Estrada} and J.A. {Villasenor-Alva}.
\newblock \emph{mvShapiroTest: Generalized Shapiro-Wilk test for multivariate
  normality}, 2009.
\newblock URL \url{http://CRAN.R-project.org/package=mvShapiroTest}.
\newblock R package version 0.0.1.

\bibitem[Horv\'ath(2000)]{Hor00}
L.~Horv\'ath.
\newblock Approximations for hybrids of empirical and partial sums processes.
\newblock \emph{Journal of Statistical Planning and Inference}, 88:\penalty0
  1--18, 2000.

\bibitem[Horv\'ath et~al.(2000)Horv\'ath, Kokoszka, and
  Steinebach]{HorKokSte00}
L.~Horv\'ath, P.~Kokoszka, and J.~Steinebach.
\newblock Approximations for weighted bootstrap processes with an application.
\newblock \emph{Statistics and Probability Letters}, 48:\penalty0 59--70, 2000.

\bibitem[{Jogesh Babu} and Rao(2004)]{JogRao04}
G.~{Jogesh Babu} and C.R. Rao.
\newblock Goodness-of-fit tests when parameters are estimated.
\newblock \emph{Sankhy\-a: The Indian Journal of Statistics}, 66:\penalty0
  63--74, 2004.

\bibitem[Kac et~al.(1955)Kac, Kiefer, and Wolfowitz]{KacKieWol55}
M.~Kac, J.~Kiefer, and J.~Wolfowitz.
\newblock On tests of normality and other tests of goodness of fit based on
  distance methods.
\newblock \emph{Annals of Mathematical Statistics}, 26:\penalty0 189--211,
  1955.

\bibitem[Khmaladze(1981)]{Khm81}
E.~Khmaladze.
\newblock Martingale approach in the theory of goodness-of-fit tests.
\newblock \emph{Theory of Probability and its Applications}, 26\penalty0
  (2):\penalty0 240--257, 1981.

\bibitem[Kojadinovic and Yan(2010)]{KojYan10}
I.~Kojadinovic and J.~Yan.
\newblock Modeling multivariate distributions with continuous margins using the
  \texttt{copula} {R} package.
\newblock \emph{Journal of Statistical Software}, 34\penalty0 (9):\penalty0
  1--20, 2010.

\bibitem[Kojadinovic and Yan(2011)]{KojYan11}
I.~Kojadinovic and J.~Yan.
\newblock A goodness-of-fit test for multivariate multiparameter copulas based
  on multiplier central limit theorems.
\newblock \emph{Statistics and Computing}, 21\penalty0 (1):\penalty0 17--30,
  2011.

\bibitem[Kojadinovic et~al.(2011)Kojadinovic, Yan, and Holmes]{KojYanHol11}
I.~Kojadinovic, J.~Yan, and M.~Holmes.
\newblock Fast large-sample goodness-of-fit for copulas.
\newblock \emph{Statistica Sinica}, 21\penalty0 (2):\penalty0 841--871, 2011.

\bibitem[Kosorok(2008)]{Kos08}
M.R. Kosorok.
\newblock \emph{Introduction to empirical processes and semiparametric
  inference}.
\newblock Springer, New York, 2008.

\bibitem[McNeil et~al.(2005)McNeil, Frey, and Embrechts]{McNFreEmb05}
A.J. McNeil, R.~Frey, and P.~Embrechts.
\newblock \emph{Quantitative risk management}.
\newblock Princeton University Press, New Jersey, 2005.

\bibitem[Nadarajah and Kotz(2005)]{NadKot05}
S.~Nadarajah and S.~Kotz.
\newblock Mathematical properties of the multivariate $t$ distribution.
\newblock \emph{Acta Applicandae Mathematicae}, 89:\penalty0 53--84, 2005.

\bibitem[Nadarajah and Kotz(2008)]{NadKot08}
S.~Nadarajah and S.~Kotz.
\newblock Estimation methods for the multivariate $t$ distribution.
\newblock \emph{Acta Applicandae Mathematicae}, 102:\penalty0 99--118, 2008.

\bibitem[Parker(2010)]{Par10}
T.~Parker.
\newblock A comparison of alternative approaches to sup-norm goodness of fit
  tests with estimated parameters.
\newblock \emph{Working paper}, 2010.

\bibitem[{R Development Core Team}(2011)]{Rsystem}
{R Development Core Team}.
\newblock \emph{{R}: {A} Language and Environment for Statistical Computing}.
\newblock R Foundation for Statistical Computing, Vienna, Austria, 2011.
\newblock URL \url{http://www.R-project.org}.
\newblock {ISBN} 3-900051-07-0.

\bibitem[Romano(1988)]{Rom88}
J.P. Romano.
\newblock A bootstrap revival of some nonparametric distance tests.
\newblock \emph{Journal of the American Statistical Association}, 83\penalty0
  (403):\penalty0 698--708, 1988.

\bibitem[Seber(2008)]{Seb08}
G.A.F. Seber.
\newblock \emph{A matrix handbook for statisticians}.
\newblock Wiley Series in Probability and Statistics. Wiley, 2008.

\bibitem[Sklar(1959)]{Skl59}
A.~Sklar.
\newblock Fonctions de r\'epartition \`a $n$ dimensions et leurs marges.
\newblock \emph{Publications de l'Institut de Statistique de l'Universit\'e de
  Paris}, 8:\penalty0 229--231, 1959.

\bibitem[Stephens(1974)]{Ste74}
M.A. Stephens.
\newblock {EDF} statistics for goodness-of-fit and some comparisons.
\newblock \emph{Journal of the American Statistical Association}, 69:\penalty0
  730--737, 1974.

\bibitem[Stephens(1976)]{Ste76}
M.A. Stephens.
\newblock Asymptotic results for goodness-of-fit statistics with unknown
  parameters.
\newblock \emph{Annals of Statistics}, 4:\penalty0 357--369, 1976.

\bibitem[Stute et~al.(1993)Stute, {Gonz\'ales Manteiga}, and {Presedo
  Quindimil}]{StuGonPre93}
W.~Stute, W.~{Gonz\'ales Manteiga}, and M.~{Presedo Quindimil}.
\newblock Bootstrap based goodness-of-fit tests.
\newblock \emph{Metrika}, 40:\penalty0 243--256, 1993.

\bibitem[Sukhatme(1972)]{Suk72}
S.~Sukhatme.
\newblock Fredholm determinant of a positive definite kernel of a special type
  and its application.
\newblock \emph{Annals of Mathematical Statistics}, 43:\penalty0 1914--1926,
  1972.

\bibitem[{van der Vaart}(1998)]{van98}
A.W. {van der Vaart}.
\newblock \emph{Asymptotic statistics}.
\newblock Cambridge University Press, 1998.

\bibitem[{van der Vaart} and Wellner(2000)]{vanWel96}
A.W. {van der Vaart} and J.A. Wellner.
\newblock \emph{Weak convergence and empirical processes}.
\newblock Springer, New York, 2000.
\newblock Second edition.

\bibitem[{Villasenor-Alva} and {Gonzalez-Estrada}(2009)]{VilGon09}
J.A. {Villasenor-Alva} and E.~{Gonzalez-Estrada}.
\newblock A generalization of {S}hapiro--{W}ilk's test for multivariate
  normality.
\newblock \emph{Communications in Statistics: Theory and Methods}, 38:\penalty0
  1870--1883, 2009.

\end{thebibliography}

\end{document}